\documentclass[a4paper, 11pt]{article}

\usepackage{mathtools, amssymb, amsthm, bbm}
\usepackage{graphicx}
\usepackage{booktabs, multirow}
\usepackage{url}
\usepackage[colorlinks,citecolor=blue]{hyperref}
\usepackage[round]{natbib}
\usepackage{authblk}
\usepackage[margin = 1.25in]{geometry}
\usepackage{makecell}

\theoremstyle{plain} 
\newtheorem{theorem}{Theorem}[section] 
 
\newtheorem{proposition}[theorem]{Proposition}

\theoremstyle{definition}

\theoremstyle{remark}

\newcommand{\one}{\mathbbm{1}}
\newcommand{\myS}{{\rm S}}

\begin{document}

\title{Valid sequential inference on probability forecast performance}
\author{Alexander Henzi} 
\author{Johanna F.~Ziegel}
\affil{University of Bern, Switzerland \\ \hfill \\ \texttt{alexander.henzi@stat.unibe.ch} \ \texttt{johanna.ziegel@stat.unibe.ch}}
\date{\today} 
\maketitle

\begin{abstract}
Probability forecasts for binary events play a central role in many applications.
Their quality is commonly assessed with proper scoring rules, which
assign forecasts a numerical score such that a correct forecast achieves a 
minimal expected score. In this paper, we construct e-values for testing
the statistical significance of score differences of competing forecasts in sequential settings.
E-values have been proposed as an alternative to p-values for hypothesis
testing, and they can easily be transformed into conservative p-values by taking the multiplicative inverse. The e-values proposed in this article are valid in 
finite samples without any assumptions on the data generating processes. They also allow optional stopping, so a forecast user may decide to interrupt 
evaluation taking into account the available data at any time and still draw statistically valid inference, which is generally not true for classical p-value based tests. In a case study on postprocessing of precipitation forecasts, state-of-the-art forecasts dominance tests and e-values lead to the same conclusions.
\end{abstract}

\section{Introduction}
Consider a forecast user who compares probability predictions $p_t, q_t \in [0,1]$, $t \in \mathbb{N}$, for a binary event $Y_{t + h} \in \{0, 1\}$, where $h \geq 1$ is the time lag between the forecasts and observations. At time $t$, the forecasts $p_t, q_t$ as well as any predictions and observations before $t$ are known. This setting encompasses many practical situations such as probability of precipitation forecasts $h$ days ahead or predictions of negative economic growth in the next quarter. The forecast user wants to draw conclusions on the relative performance of $p_t$ and $q_t$, that is, identify the better of the two forecasts.

Probability forecasts for binary events are arguably the simplest and best understood type of probabilistic forecasts; see \cite{Winkler1996} for an earlier overview and more recent reviews in \cite{Gneiting2007a}, \cite{Ranjan2010} and \cite{Lai2011}. The key requirements for probability forecasts are calibration, meaning that events with a predicted probability of $p$ should occur at a frequency of $p$, and sharpness, which requires the forecast probabilities to be as informative as possible, i.e.~close to $0$ or $1$. These properties are simultaneously assessed with proper scoring rules \citep{Gneiting2007a}, which coincide with consistent scoring functions for the mean \citep{Gneiting2011} in the case of probability forecasts and will be simply referred to as scoring functions in this article. A scoring function $\myS = \myS(p, y)$ maps a forecast probability $p$ and an observation $y$ to a numerical score, with smaller scores indicating a better forecast. More precisely, $\myS$ satisfies
\begin{equation} \label{eq:scoring}
	\mathbb{E}_{\pi}\{\myS(\pi, Y)\} \leq \mathbb{E}_{\pi}\{\myS(p, Y)\}
\end{equation}
for all $p, \pi \in [0,1]$, where $\mathbb{E}_{\pi}(\cdot)$ denotes the expected value under the assumption that $Y = 1$ with probability $\pi$. That is, the true event probability attains a minimal expected score, and $\myS$ is strictly consistent if equality in \eqref{eq:scoring} only holds for $p = \pi$. Well-known examples are the Brier score $(y-p)^2$ and the logarithmic score $-\!\log(|1-y-p|)$.

For comparing the predictions $p_t$ and $q_t$, the forecast user would therefore collect a sample $(y_{t + h}, p_t, q_t)$, $t = 1, \dots, T$, and compute the empirical score difference $\frac{1}{T}\sum_{t = 1}^T \{\myS(p_t, y_{t + h}) -\myS(q_t, y_{t + h})\}$. To take into account the sampling uncertainty, such score differences are accompanied with p-values indicating whether the mean score significantly differs from zero. If the observations are not independent, as usual in sequential settings, a number of asymptotic tests are available to compute p-values, with prominent ones being the Diebold-Mariano test \citep{Diebold1995} and the test of conditional predictive ability by Giacomini and White \citep{Giacomini2006}. Further examples are the martingale-based approaches by \cite{Seillier1993} or \cite{Lai2011}, and more recent tests of forecast dominance \citep{Ehm2018, Yen2021}.

In this article, we expand the tools for drawing inference on probability forecast performance by e-values. E-values, with `e' referring to `expectation', have been introduced as an alternative to p-values for testing. The term e-value was used first in the literature by \citet{Vovk2021}, but the concept also appears in \cite{Shafer2021}, under the name `betting score', and in \cite{Gruenwald2020}; see also the series of working papers on \url{http://alrw.net/e/}. In brief, an e-value is a random variable $E \geq 0$ satisfying $\mathbb{E}(E) \leq 1$ under a given null hypothesis. By Markov's inequality, this implies $\mathbb{P}(E > 1/\alpha) \leq \alpha$ for any $\alpha \in (0,1)$, i.e.~large realizations of an e-value can be considered as evidence against the null hypothesis, and the value $1/E$ is a conservative p-value. A main motivation for using e-values instead of p-values, explained in more detail in \cite{Shafer2021}, \cite{Gruenwald2020} and \cite{Wang2020}, is their simple behaviour under combinations. The arithmetic average of e-values is again an e-value, and so is the product of independent or sequential e-values. E-values also have advantages over p-values with respect to false discovery rate control \citep{Wang2020}, which may be beneficial for the comparison of forecasts over many locations such as a fine latitute-longitude grid around the globe. The central property for this article is that e-values are valid under optional stopping and continuation, that is, the collection of data for computing an e-value may be stopped or continued based on seeing the past observations and e-values. It is well known that p-values in general do not satisfy these properties.

Our main contribution is the result that for any scoring rule $\myS$ and forecasts $p, q$ for $Y \in \{0,1\}$, there exists an e-value which satisfies $\mathbb{E}_{\pi}(E) \leq 1$ if and only if $\mathbb{E}_{\pi}\{\myS(p, Y) - \myS(q, Y)\} \leq 0$. This e-value allows one to draw inference on the relative performance of the forecasts $p$ and $q$ with respect to $\myS$ with only a single observation. In a sequential setting, e-values from different time points can be merged by products into a non-negative supermartingale or test-martingale, which are analysed in detail by \citet{Ramdas2020}. This gives a statistical test of forecast dominance which is valid in finite samples without any further assumptions on the data generating process. Moreover, the constructed e-values are valid under optional stopping, so a forecast user may decide to continue or stop forecast comparison based on only a part of the data. These advantages are inherent to any e-value, but we believe that they make e-values a particularly attractive tool in sequential forecast evaluation. The above mentioned tests for comparing probability forecasts are all only asymptotically valid, and the underlying assumptions are often difficult or impossible to verify. In the case of tests with asymptotic normality, the selection of the variance estimator for the test statistic may have a dramatic impact on the test validity \citep[see for example][Table 1]{Lazarus2018}. More serious is the problem of optional stopping. In a simple but realistic simulation example in this article, we demonstrate that commonly used tests for forecast superiority at the level of $0.05$ may yield rejection rates of up to $0.15$ under optional stopping, grossly misleading and invalidating statistical inference. Although statisticians and practitioners should know that the sample size for classical tests must be determined in advance, we believe that optional stopping is quite common in forecast evaluation, where data arrives sequentially and it might be tempting to stop, or continue, an expensive or time consuming experiment upon seeing enough, or just not enough, evidence against a hypothesis. Moreover, also in the analysis of past datasets, optional continuation may occur implicitly, in that methods are often first evaluated on a smaller, manageable part of the data and the analysis is continued if the results are promising. Last but not least, even to a statistician fully aware of the problem of optional stopping, it may be desirable to have a tool that allows stopping an evaluation when enough evidence is collected, without having to bother about the implications for inference.

The advantages of e-values for forecast comparison relative to the currently available methods come at a price, namely, lower power. This is well known not only for e-values, but a general phenomenon when tools for anytime-valid inference are compared to methods for inference with a fixed sample size; see for example Figure 1 in \cite{Waudabysmith2021} displaying the widths of time uniform and fixed time confidence intervals for a mean. However, in the case study in this article, p-values from classical tests and e-values lead to qualitatively the same results.

\section{Preliminaries}\label{sec:preliminaries}

\subsection{Scoring functions for probabilities} \label{sec:scoring}
Throughout the article, $\mathbb{E}_{\mathbb{Q}}(\cdot)$ denotes the expected value of the quantity in parentheses under the probability distribution $\mathbb{Q}$. If the measure $\mathbb{Q}$ is the probability $\pi \in [0,1]$ of a binary event, we simply write $\mathbb{E}_{\pi}(\cdot)$.

When comparing probability forecasts with scoring functions, the choice of the scoring function plays a crucial role. While \eqref{eq:scoring} guarantees that the true event probability always achieves a minimal expected score, different scoring functions may yield different rankings when misspecified forecasts are compared  \citep{Patton2020}. This problem can be avoided by basing forecast comparison on several or all scoring rules simultaneously. For probabilities of binary events, under mild regularity conditions stated in \citet[Theorem 2.3]{Gneiting2007}, all consistent scoring functions are of the form
\begin{equation} \label{eq:mixture}
	\myS(p, y) = \int_{(0,1)} \myS_{\theta}(p,y) \, d\nu(\theta),
\end{equation}
where $\nu$ is a locally finite Borel measure on $(0,1)$ and 
\begin{align} \label{eq:elementary}
	\myS_{\theta}(p, y) = (\theta - y)\{\one(p > \theta) - \one(y > \theta)\}
	= \begin{cases}
		\theta, & \quad y = 0, \ p > \theta, \\
		1 - \theta, & \quad y = 1,\  p \leq \theta, \\
		0, & \quad \text{otherwise}.
	\end{cases}
\end{align}
In the equation above, $\one$ denotes the indicator function. This representation originally dates back to \cite{Schervish1989}; see also \citet{Ehm2016}. The scoring function $\myS$ is strictly consistent if and only if $\nu$ assigns positive mass to all non-degenerate intervals in $(0,1)$.

\subsection{Forecast dominance and hypotheses} \label{sec:dominance}
Let $(\Omega, \mathcal{F}, \mathbb{Q})$ be a probability space with a filtration $\mathcal{F}_t$, $t \in \mathbb{N}$. We assume that the competing forecasts $p_t, q_t$ and the observation $Y_t$ are a random vector $(Y_t, p_t, q_t)$ adapted to $\mathcal{F}_t$, and $(p_t, q_t)$ are forecasts for $Y_{t + h}$ for some integer lag $h \geq 1$. The measure $\mathbb{Q}$ describes the joint dynamics of the forecasts and the observations.

When comparing forecasts using a given scoring function $\myS$, the quantity of interest is often not the unconditional expected score difference $\mathbb{E}_{\mathbb{Q}}\{\myS(p_t, Y_{t + h}) - \myS(q_t, Y_{t + h})\}$, which describes the average relative performance of $p_t$ and $q_t$. More interesting is the question whether given the information at the time of forecasting, $\mathcal{F}_t$, the conditional event probability is closer to $p_t$ than to $q_t$, i.e.~$\mathbb{E}_{\mathbb{Q}}\{\myS(p_t, Y_{t + h}) - \myS(q_t, Y_{t + h}) \mid \mathcal{F}_t\} \leq 0$. This notion of forecast dominance is called conditional forecast dominance and has been introduced by \cite{Giacomini2006}.

The definition of forecast dominance used here does not require knowledge about the processes generating $(Y_t, p_t, q_t)$, which are often unknown or not well enough understood to formulate a suitable stochastic model. The relative performance of the forecasts $p_t, q_t$ is governed by the underlying distribution $\mathbb{Q}$, and hypotheses about forecast dominance are hypotheses about the data generating process. Denoting by $\mathcal{P}$ the set of probability measures on $(\Omega, \mathcal{F})$, we will construct tests for the following hypotheses:
\begin{align}	
	\mathcal{H}_{\myS; c} & = \left[ \mathbb{P} \in \mathcal{P}: 
	c_t\mathbb{E}_{\mathbb{P}}\{\myS(p_t, Y_{t + h}) - \myS(q_t, Y_{t + h}) \mid \mathcal{F}_{t}\} \leq 0 \text{ a.s.}, \, t \in \mathbb{N}\right] \label{eq:h0s} \\	
	\mathcal{H}_c & = \left[\mathbb{P} \in \mathcal{P}: 
	\sup_{\theta \in [0,1]} c_t\mathbb{E}_{\mathbb{P}}\{\myS_{\theta}(p_t, Y_{t + h}) - \myS_{\theta}(q_t, Y_{t + h}) \mid \mathcal{F}_{t}\} \leq 0 \text{ a.s.}, t \in \mathbb{N}\right] \label{eq:h0}
\end{align}
Here, $(c_t)_{t \in \mathbb{N}}$ is a sequence of $\mathcal{F}_t$-measurable random variables $c_t \in \{0, 1\}$. If $c_t = 1$ for all $t$, we write $\mathcal{H}_{\myS; c} = \mathcal{H}_{\myS}$ and $\mathcal{H}_{c} = \mathcal{H}$. In this case, Hypothesis \eqref{eq:h0s} states that at all times $t$, forecast $p_t$ is at least as good as $q_t$ under the scoring rule $\myS$, given the information available at the time of forecasting. Hypothesis \eqref{eq:h0} is stronger and states that $p_t$ is preferred over $q_t$ under all elementary scores \eqref{eq:elementary}, and it corresponds to what is denoted by $H^s_{-}$ in \citet[formula (2.5)]{Ehm2018}. Recently, hypotheses of the type of $\mathcal{H}$ or $\mathcal{H}_{\myS}$ have been put in question by \cite{Zhu2020}, who demonstrate that the null hypothesis of equal conditional predictive accuracy is basically never satisfied in realistic settings. Their criticism does not directly apply to one-sided hypotheses, but we emphasize that the null hypotheses $\mathcal{H}_{\myS}$ or $\mathcal{H}$ are rather strong in that they require conditional dominance at all time points. Tests for these hypotheses are therefore most suitable for the comparison of a new method to an established benchmark or state-of-the-art method, where rejecting the null means that the new method outperforms the benchmark at least in some situations -- a minimal requirement. 

The classical example for a situation with $\mathbb{P} \in \mathcal{H}$ is $p_t = \mathbb{P}(Y_{t+h} = 1 \mid \mathcal{F}_{t})$, i.e.~$p_t$ is the ideal forecast in the sense of \cite{Gneiting2013}. For the hypotheses $\mathcal{H}_{\myS}$, one may easily construct situations with dominance relations also among non-calibrated forecasts; see the simulation examples in Section \ref{sec:simulations}.

In many practical situations, it cannot be expected that a forecast method always outperforms another one, and forecast users are rather interested in the question under what conditions a particular forecast should be preferred. Choosing the sequence $(c_t)_{t \in \mathbb{N}}$ such that $c_t = 1$ if the condition holds and $c_t = 0$ otherwise allows to formalize this question. Here the variables $c_t$ must be $\mathcal{F}_t$-measurable, that is, known at the time of forecasting. In practice this is not a severe limitation, since the information that one forecast is more accurate than another one under a given condition is only useful if this condition is known at the time of forecasting, and not ex post. But also from a theoretical point of view, forecast evaluation should only be conditioned on the forecasts themselves, and not on the observations or on information not available at the time of forecasting; see \cite{Lerch2017} for a detailed analysis of this issue in the case of extreme events.

\section{E-values for testing forecast dominance} \label{sec:main}

\subsection{One-period setting} \label{sec:oneperiod}
We first construct e-values for the comparison of probability forecasts in a one-period setting where $Y = 1$ with probability $\pi$ and the forecasts $p,q$ are assumed to be fixed numbers in $(0,1)$. These e-values give an absolute and valid interpretation of predictive performance with only a single observation, e.g.~for a single time point in the sequential setting of Section \ref{sec:dominance}, or in binary classification problems with independent forecast-observation pairs, where the competing forecasts are based on covariates and $\pi$ is the probability that $Y = 1$ conditional on the covariate values. The null hypotheses that $p$ is a better forecast than $q$ with respect to a given score $\myS$, or with respect to all scoring functions simultaneously, here correspond to
\begin{align*}
	H_{\myS} & = [\pi \in [0,1]: \, \mathbb{E}_{\pi}\{\myS(p, Y) - \myS(q, Y)\} \leq 0], \\
	H & = \left[\pi \in [0,1]: \, \sup_{\theta \in [0,1]} \mathbb{E}_{\pi}\{\myS_{\theta}(p, Y) - \myS_{\theta}(q, Y)\} \leq 0 \right].
\end{align*}
For $p < q$, a direct computation shows that $H_{\myS}$ is the interval $[0, \kappa_{\nu}\{[p, q)\}]$ with
\[
\kappa_{\nu}\{[a,b)\} = \frac{\int_{[a,b)} \theta \, d\nu(\theta)}{\nu\{[a,b)\}}, \ 0 < a < b < 1.
\]
The stronger null hypothesis $H$ is the intersection of these intervals for all mixing measures $\nu$, that is, $[0, p]$. In the case $q > p$, the intervals take the form $[\kappa_{\nu}\{[q, p)\}, 1]$ or $[p, 1]$, respectively.

For a set $\mathcal{P}$ of probability measures and disjoint $H, H' \subset \mathcal{P}$, we say that an e-value $E$ has null hypothesis $H$ and alternative $H'$ if $\mathbb{E}_{\mathbb{P}}(E) \leq 1$ for all $\mathbb{P} \in H$ and $\mathbb{E}_{\mathbb{Q}}(E) > 1$ for all $\mathbb{Q} \in H'$. The following theorem characterizes e-values for testing $H_{\myS}$.

\begin{theorem} \label{thm:uniqueness}
	Let $\myS$ be a consistent scoring function and $p, q \in (0,1)$, $p \neq q$. Assume that the mixing measure $\nu$ of $\myS$ satisfies $\nu\big\{[\min(p,q), \max(p,q))\big\} > 0$. Then a function $E = E(y)$ is an e-value with null hypothesis $H_{\myS}$ and alternative $[0,1] \setminus H_{\myS}$, if and only if for some $\lambda \in (0,1]$,
	\begin{equation} \label{eq:representation}
		E(y) = E_{p,q; \lambda}(y) = 1 + \lambda \frac{\myS(p, y) - \myS(q, y)}{|\myS(p, \one\{p > q\}) - \myS(q, \one\{p > q\})|}.
	\end{equation}
\end{theorem}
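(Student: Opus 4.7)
The plan exploits two structural simplifications. First, since $Y$ takes only two values, any function $E=E(y)$ is determined by the pair $(E(0),E(1))\in[0,\infty)^2$, and the map $\pi\mapsto\mathbb{E}_\pi(E)=(1-\pi)E(0)+\pi E(1)$ is affine in $\pi$. Second, by the computation preceding the theorem, $H_{\myS}$ is a single closed interval with one endpoint at $0$ or $1$ and the other at $\pi^{*}:=\kappa_\nu\{[\min(p,q),\max(p,q))\}$; the hypothesis that $\nu$ has positive mass on this interval guarantees $\pi^{*}\in(0,1)$.

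My first step is to translate the e-value property into constraints on $(E(0),E(1))$. By continuity of $\pi\mapsto\mathbb{E}_\pi(E)$ together with the requirement that it be $\le 1$ on $H_{\myS}$ and strictly greater than $1$ on the complement, the boundary identity $\mathbb{E}_{\pi^{*}}(E)=1$ and strict monotonicity with the correct sign are forced: $E(1)>E(0)$ when $p<q$ (so that the affine function is increasing and $H_{\myS}=[0,\pi^{*}]$), and $E(0)>E(1)$ when $p>q$. Together with $E(y)\ge 0$, this reduces the admissible e-values to a one-parameter family.

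Next I match this family with \eqref{eq:representation} by an explicit mixture computation. Using \eqref{eq:elementary}, for $p<q$ one obtains $\myS(p,0)-\myS(q,0)=-A$ and $\myS(p,1)-\myS(q,1)=B$ with $A=\int_{[p,q)}\theta\,d\nu(\theta)$ and $B=\int_{[p,q)}(1-\theta)\,d\nu(\theta)$, so $\pi^{*}=A/(A+B)$ and the denominator in \eqref{eq:representation} (with $\one\{p>q\}=0$) equals $A$. Setting $\lambda:=1-E(0)$, the boundary identity forces $E(1)=1+\lambda B/A$, which is exactly \eqref{eq:representation}. Non-negativity of $E(0)$ gives $\lambda\le 1$ and strict monotonicity gives $\lambda>0$; conversely, any $\lambda\in(0,1]$ yields a valid e-value by reversing these steps. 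The case $p>q$ is entirely symmetric, with the roles of $y=0$ and $y=1$ swapped.

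The argument is essentially a direct calculation, so I do not anticipate a substantive obstacle. The main source of friction will be the bookkeeping required to handle the two cases $p<q$ and $p>q$ uniformly, since the absolute value and the indicator $\one\{p>q\}$ in the normalization of \eqref{eq:representation} select different reference observations in each case; treating them as two symmetric sub-cases appears cleaner than attempting a single unified formula.
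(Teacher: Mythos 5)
Your proposal is correct and follows essentially the same route as the paper's proof: reduce to the affine map $\pi\mapsto(1-\pi)E(0)+\pi E(1)$, force the boundary identity $\mathbb{E}_{\pi^*}(E)=1$ at $\pi^*=\kappa_\nu\{[\min(p,q),\max(p,q))\}$ together with the correct strict monotonicity, and parametrize the resulting one-dimensional family by $\lambda=1-E(\one\{p>q\})$, with non-negativity and monotonicity giving $\lambda\in(0,1]$. Your explicit computation of $\myS(p,0)-\myS(q,0)=-A$ and $\myS(p,1)-\myS(q,1)=B$ from the Schervish representation is just a more detailed version of the paper's observation that $d_{p,q}(0)<0<d_{p,q}(1)$.
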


Theorem \ref{thm:uniqueness} gives a family of e-values for testing forecast dominance with a given score $\myS$, and in a next step, we tune the parameter $\lambda$ in \eqref{eq:representation} such that the corresponding e-value has maximal `power' against a given alternative. The notion of power for e-values differs from the classical power of p-values, and it is motivated in detail by \cite{Shafer2021} and \cite{Gruenwald2020}. An e-value can be interpreted as a bet against the null hypothesis, and a product $\prod_{t = 1}^T E_t$ of e-values represents the accumulated capital at time if $T$ the initial capital is $1$ and all money is invested in the bet at each step. Maximizing the gains is equivalent to maximizing the `growth rate' $(1/T)\log \prod_{t = 1}^T E_t = (1/T)\sum_{t = 1}^T \log (E_t)$, a strategy which is sometimes called Kelly betting, in reference to \citet{Kelly1956}. If an e-value maximizes $\mathbb{E}_{\mathbb{P}}\{\log(E)\}$ under a measure $\mathbb{P}$ representing an alternative hypothesis, it is called growth rate optimal or simply GROW \citep{Gruenwald2020}. One such alternative could be that $Y = 1$ with probability $q$, but one can maximize the power under any other alternative $\pi_1 \not\in H_{\myS}$.

\begin{theorem} \label{thm:grow}
	Under the assumptions of Theorem \ref{thm:uniqueness}, for any $\pi_1 \not\in H_{\myS}$, $\mathbb{E}_{\pi_1}\{\log(E_{p,q;\lambda})\}$ is maximal in $\lambda$ if and only if
	\[
	\lambda = \begin{dcases}
		(1-\pi_1) + \pi_1 \frac{\myS(p, 1) - \myS(q, 1)}{\myS(p, 0) - \myS(q, 0)}, & \ p > q, \\
		\pi_1 + (1-\pi_1) \frac{\myS(p, 0) - \myS(q, 0)}{\myS(p, 1) - \myS(q, 1)}, & \ p < q.
	\end{dcases}
	\]
	The corresponding e-value equals
	\[
	E_{p,q}^{\pi_1}(y) = \begin{dcases}
		\frac{1 - \pi_1}{1 - \kappa_{\nu}\{[\min(p,q),\max(p,q))\}}, & \ y = 0, \\
		\frac{\pi_1}{\kappa_{\nu}\{[\min(p,q),\max(p,q))\big\}}, & \ y = 1.
	\end{dcases}
	\]
\end{theorem}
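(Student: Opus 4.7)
The plan is to treat this as a standard Kelly-type one-parameter optimization: expand the expected log into an explicit scalar function of $\lambda$ and invoke strict concavity to conclude that the first-order condition characterizes the unique maximizer. I work out the case $p > q$ in detail; the case $p < q$ is entirely symmetric (formally, replace $(Y,p,q)$ by $(1-Y, 1-p, 1-q)$, which interchanges the roles of $y=0$ and $y=1$ and swaps the relevant interval).

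Using the mixture representation \eqref{eq:mixture}--\eqref{eq:elementary}, I would first set
\[
N := \myS(p,0) - \myS(q,0) = \int_{[q,p)} \theta\, d\nu(\theta), \qquad D := -\bigl[\myS(p,1) - \myS(q,1)\bigr] = \int_{[q,p)} (1-\theta)\, d\nu(\theta),
\]
so that $M := N + D = \nu\{[q,p)\} > 0$ by assumption and $\kappa := \kappa_{\nu}\{[q,p)\} = N/M$. Since $[q,p) \subset (0,1)$, both $N$ and $D$ are strictly positive. The denominator in \eqref{eq:representation} equals $D$, so substituting into the formula for $E_{p,q;\lambda}$ gives $E_{p,q;\lambda}(0) = 1 + \lambda\, N/D$ and $E_{p,q;\lambda}(1) = 1 - \lambda$.

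The objective then reduces to
\[
f(\lambda) := \mathbb{E}_{\pi_1}\{\log E_{p,q;\lambda}(Y)\} = \pi_1 \log(1-\lambda) + (1-\pi_1)\, \log\!\bigl(1 + \lambda\, N/D\bigr),
\]
which is strictly concave in $\lambda$ on its natural domain as a sum of strictly concave logarithms; hence any critical point is its unique maximizer. Setting $f'(\lambda)=0$ and solving yields $\lambda^{\ast} = (1-\pi_1) - \pi_1\, D/N$, which is the statement's expression once $-D/N$ is rewritten as $[\myS(p,1)-\myS(q,1)]/[\myS(p,0)-\myS(q,0)]$. To check $\lambda^{\ast} \in (0,1]$, I would note that a direct computation gives $H_{\myS} = [\kappa, 1]$ when $p > q$; thus $\pi_1 \notin H_{\myS}$ means $\pi_1 < \kappa = N/M$, and rearranging yields $(1-\pi_1)N > \pi_1 D$, i.e., $\lambda^{\ast} > 0$, while $\lambda^{\ast} \leq 1$ is immediate.

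Plugging $\lambda^{\ast}$ back gives $1 - \lambda^{\ast} = \pi_1 (N+D)/N = \pi_1/\kappa$ and $1 + \lambda^{\ast} N/D = (1-\pi_1)(N+D)/D = (1-\pi_1)/(1-\kappa)$, reproducing the stated closed form for $E_{p,q}^{\pi_1}$. The case $p < q$ is handled by the same computation with the interval $[p,q)$ in place of $[q,p)$ and the roles of $y=0$ and $y=1$ swapped; this is why the expression for $\lambda^{\ast}$ in the theorem statement comes in two sign-symmetric branches. The only nuisance is tracking these sign conventions between the two cases, but I do not foresee any genuine obstacle beyond the explicit first-order calculation.
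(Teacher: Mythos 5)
Your proposal is correct and follows essentially the same route as the paper's proof: express $\mathbb{E}_{\pi_1}\{\log E_{p,q;\lambda}(Y)\}$ as an explicit function of $\lambda$, solve the first-order condition, use $\pi_1 \notin H_{\myS}$ to verify $\lambda^{\ast} \in (0,1]$, and substitute back to obtain the likelihood-ratio form via $\kappa_{\nu}$. The only cosmetic differences are that you treat the case $p>q$ rather than $p<q$ and justify optimality of the critical point by strict concavity, whereas the paper uses the boundary behaviour $g(0)=0$ and $\lim_{\lambda\to 1}g(\lambda)=-\infty$; both arguments are sound.
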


Theorem \ref{thm:grow} shows that the GROW e-values for the comparison of probability forecasts take the form of likelihood ratios with the alternative probability in the numerator and the integral of the mixing measure $\nu$ (suitably normalized) over the interval $[\min(p,q), \max(p,q))$ in the denominator. It is possible to obtain this result directly by applying Theorem 1 in \cite{Gruenwald2020}, since $\kappa_{\nu}\{[\min(p,q),\max(p,q))\}$ is the boundary of the null-hypothesis $H_{\myS}$. We chose the indirect but more instructive approach via Theorem \ref{thm:uniqueness} since to the best of our knowledge, this is the first application of e-values to forecast comparison, and similar approaches might be used to construct e-values for score differences in more general settings than the evaluation of binary event forecasts. In fact,  \citet[Proposition 2]{Waudabysmith2021} have a similar representation as in \eqref{eq:representation} for e-values for testing hypotheses about a constant mean.

\begin{table}[t]
	\centering
	\caption{Commonly used scoring rules and the corresponding denominator in the GROW e-value under the assumption $p < q$. The case $p > q$ is obtained by interchanging the roles of $p$ and $q$. The mixing measure $\nu$ is given in the form of its Lebesgue density $h(\theta)$, $\theta \in (0,1)$. For the spherical score, $\|p\| := (2p^2-2p+1)^{1/2}$ denotes the Euclidean norm of the vector $(p, 1-p)$. \label{tab:scores}}
	\bigskip
	\begin{tabular}{cccc}
		Score & $\myS(p,y)$ & Mixing density $\nu$ & $\kappa_{\nu}\{[p,q)\}$ \\[0.75em]
		Brier & $(p-y)^2 $ & $2$ & $(p + q)/2$ \\[0.5em]
		Logarithmic & $-\log(|1-y-p|)$ & $\theta^{-1}(1-\theta)^{-1}$ & $\log\left(\frac{1-p}{1-q}\right) \Big/\log\left(\frac{q(1-p)}{p(1-q)}\right)$ \\[0.5em]
		Spherical & $1 - |1-y-p|/\|p\|$ & $(2\theta^2 -2\theta + 1)^{-3/2}$ & $\frac{(q-1)\|p\| - (p-1)\|q\|}{(2q-1)\|p\|-(2p-1)\|q\|}$ \\
	\end{tabular}
\end{table}

For the test of the null hypothesis $H$, applying \citet[Theorem 1]{Gruenwald2020} shows that the GROW e-value is the likelihood ratio.

\begin{theorem} \label{thm:allscores}
	Let $p,q \in (0,1)$. Then the GROW e-value with null hypothesis $H$ and alternative hypothesis that $Y = 1$ with probability $\pi_1 \not\in H$ is given by
	\[
	E^{\pi_1*}_{p,q}(y) = \begin{dcases}
		(1-\pi_1)/(1-p), & \ y = 0,\\
		\pi_1/p, & \ y = 1.\\
	\end{dcases}
	\]
\end{theorem}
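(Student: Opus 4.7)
The strategy is to apply \citet[Theorem 1]{Gruenwald2020}, which is the result invoked immediately before the statement: for a convex null hypothesis in a dominated model and a singleton alternative, the GROW e-value is the likelihood ratio against the reverse information projection of the alternative onto the null. In our Bernoulli setting this projection is just the closest point of the null interval to $\pi_1$, so the argument reduces to (i) identifying the null $H$ as a half-line and (ii) recognising $p$ as its boundary.

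For step (i), I would recall the computation carried out between Theorems \ref{thm:uniqueness} and \ref{thm:grow}: for every admissible mixing measure $\nu$, $H_{\myS}$ equals either $[0, \kappa_{\nu}\{[p, q)\}]$ (if $p < q$) or $[\kappa_{\nu}\{[q, p)\}, 1]$ (if $p > q$), and $H$ is the intersection of these sets over all $\nu$. Since $\kappa_{\nu}\{[a, b)\}$ is a weighted average of $\theta$ over $[a,b)$, it can be pushed arbitrarily close to the left endpoint by choosing $\nu$ to concentrate near it; hence the intersection yields $H = [0, p]$ if $p < q$ and $H = [p, 1]$ if $p > q$. In both cases $p$ is the unique endpoint of $H$ facing the complement.

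For step (ii), I would compute the reverse information projection $\pi^{*} = \arg\min_{\pi \in H} D(\pi_1 \,\|\, \pi)$, where $D$ is the Bernoulli KL divergence. Because $\pi \mapsto D(\pi_1 \,\|\, \pi)$ is strictly convex with unique minimum $0$ at $\pi = \pi_1 \notin H$, its minimum over the closed interval $H$ is attained at the endpoint nearest to $\pi_1$, which is $\pi^{*} = p$ in either case. Theorem 1 of \citet{Gruenwald2020} then identifies the GROW e-value with the likelihood ratio of $\pi_1$ against $p$, namely $\pi_1^{y}(1-\pi_1)^{1-y}/(p^{y}(1-p)^{1-y})$, which is exactly the formula stated for $E^{\pi_1*}_{p,q}$.

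As a sanity check, and to keep the proof essentially self-contained, I would verify directly that $E^{\pi_1*}_{p,q}$ is an e-value for every $\pi \in H$: the expectation $\mathbb{E}_{\pi}[E^{\pi_1*}_{p,q}] = \pi\pi_1/p + (1-\pi)(1-\pi_1)/(1-p)$ is linear in $\pi$ with value $1$ at $\pi = p$ and slope $(\pi_1 - p)/(p(1-p))$, whose sign equals $\mathrm{sgn}(\pi_1 - p)$. Hence the expectation is $\leq 1$ on exactly that half of $[0,1]$ which contains $H$ in each case. The only subtle step is appealing to the GROW optimality in \citet{Gruenwald2020}; the remaining work is the one-dimensional projection calculation, which is routine.
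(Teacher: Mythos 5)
Your proposal is correct and follows essentially the same route as the paper: both identify $H$ as $[0,p]$ (or $[p,1]$), verify that $E^{\pi_1*}_{p,q}$ has expectation at most $1$ exactly on $H$, and then invoke Theorem 1 of \citet{Gruenwald2020} with a point mass at $\pi_1$ as the alternative. Your version merely makes explicit the reverse information projection computation (that the KL-closest point of $H$ to $\pi_1$ is the endpoint $p$), which the paper leaves implicit.
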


In testing with e-values, the GROW e-value for testing the point null hypothesis $\{p\}$ against the alternative $\pi_1$ is exactly the likelihood ratio, and Theorem \ref{thm:allscores} states that this is equivalent to testing forecast dominance with respect to all scoring functions. Dominance with respect to all scoring functions is a very strong requirement on $p$, since the null hypothesis is false as soon as the true probability $\pi$ is on the same side of $p$ as $q$, that is, in $(p, 1]$ for $p < q$ or in $[0, p)$ for $q < p$, and the choice of $\pi_1$ is restricted to these sets. Unlike the e-values $E_{p,q}^{\pi_1}$, $E^{\pi_1*}_{p,q}$ does not depend directly on $q$, but indirectly via the admissible values for $\pi_1$.

\subsection{Sequential inference} \label{sec:sequential}
We now turn to the sequential model with observations $Y_t$ and forecasts $p_t, q_t$ defined on a probability space $(\Omega, \mathcal{F}, \mathbb{Q})$ with a filtration $\mathcal{F}_t$, $t \in \mathbb{N}$. 
In the case $h = 1$, for any $\mathbb{Q} \in \mathcal{H}_{\myS; c}$ and any adapted sequence $\lambda_t \in [0,1]$, $t \in \mathbb{N}$, with $E_{p_t, q_t; \lambda_t}$ as defined in \eqref{eq:representation},
\begin{align*}
	\mathbb{E}_{\mathbb{Q}} \left\{ \prod_{t = 1}^T E_{p_t, q_t; \lambda_t}(Y_{t + 1}) \right\} & \ = \ \mathbb{E}_{\mathbb{Q}} \left[\mathbb{E}_{\mathbb{Q}} \left\{ \prod_{t = 1}^T E_{p_t, q_t; \lambda_t}(Y_{t + 1}) \mid \mathcal{F}_{T}\right\}\right] \\
	& \ = \ \mathbb{E}_{\mathbb{Q}} \left[ \prod_{t = 1}^{T-1}E_{p_t, q_t; \lambda_t}(Y_{t+1})\mathbb{E}_{\mathbb{Q}} \left\{ E_{p_T, q_T; \lambda_T}(Y_{T+1}) \mid \mathcal{F}_{T}\right\} \right].
\end{align*}
If $c_t = 0$, then there is no hypothesis about $p_t$ and $q_t$. For these cases, the definition at \eqref{eq:representation} may be extended to $\lambda = 0$, so that $E_{p_t, q_t; 0} \equiv 1$ if $c_t = 0$. Then, if $\lambda_T = 0$ when $c_T = 0$,
\[
\mathbb{E}_{\mathbb{Q}} \left\{ E_{p_T, q_T; \lambda_T}(Y_{T+1})\mid \mathcal{F}_T\right\} = (1 - c_T) + c_T \mathbb{E}_{\mathbb{Q}} \left\{ E_{p_T, q_T; \lambda_T}(Y_{T+1})\mid \mathcal{F}_T\right\} \leq 1
\]
almost surely for $\mathbb{Q} \in \mathcal{H}_{\myS; c}$, so
\[
\mathbb{E}_{\mathbb{Q}} \left\{ \prod_{t = 1}^T E_{p_t, q_t; \lambda_t}(Y_{t + 1}) \right\}\leq \ \mathbb{E}_{\mathbb{Q}} \left\{ \prod_{t = 1}^{T-1}E_{p_t, q_t; \lambda_t}(Y_{t+1}) \right\}.
\]

Iterating this argument shows that the product $\prod_{t = 1}^T E_{p_t, q_t; \lambda_t}(Y_{t + 1})$ is an e-value for $\mathcal{H}_{\myS; c}$; more precisely, the process $\prod_{j = 1}^t E_{p_j, q_j; \lambda_j}(Y_{j + 1})$, $t = 2, 3, \dots$, is a non-negative supermartingale with respect to $(\mathcal{F}_t)_{t \in \mathbb{N}}$. For general lag $h$, sequential conditioning at time steps of $1$ is not possible, and one option is to average the products of all e-values with time difference of $h$, in the spirit of the U-statistics merging functions suggested by \citet{Vovk2021}. We summarize this in the following proposition.

\begin{proposition} \label{prop:combination}
	Let $(Y_t, c_t, p_t, q_t, \lambda_t) \in \{0, 1\}^2 \times (0,1)^2 \times [0,1]$ be defined on a measurable space $(\Omega, \mathcal{F})$ and adapted to the filtration $\mathcal{F}_t$, $t \in \mathbb{N}$, and assume $\lambda_t = 0$ if $c_t = 0$. Let further $\myS$ be a strictly consistent scoring function. Then for all $T \geq h + 1$, with $I_k = \{k + hs: \, s = 0, \dots, \lfloor (T-k)/h\rfloor - 1\}$, \[
	e_T = \frac{1}{h}\sum_{k = 1}^{h} \prod_{l \in I_k}E_{p_l, q_l; \lambda_l}(Y_{l + h})
	\]
	are $\mathcal{F}_T$-measurable and are e-values under $\mathcal{H}_{\myS; c}$.
\end{proposition}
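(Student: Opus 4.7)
The plan is to reduce Proposition \ref{prop:combination} to the $h = 1$ case already handled in the excerpt, running the argument in parallel along $h$ disjoint sub-filtrations indexed by $k \in \{1, \dots, h\}$, and then to finish by averaging. Since the arithmetic mean of e-values is an e-value, it suffices to show that for each $k$ the inner product $M_k := \prod_{l \in I_k} E_{p_l, q_l; \lambda_l}(Y_{l + h})$ is itself an e-value under $\mathcal{H}_{\myS; c}$. The $\mathcal{F}_T$-measurability of $e_T$ is immediate: the definition of $I_k$ forces $l + h \leq T$ for every $l \in I_k$, so that $p_l, q_l, \lambda_l$ are $\mathcal{F}_l$-measurable and $Y_{l + h}$ is $\mathcal{F}_{l + h}$-measurable, all contained in $\mathcal{F}_T$.

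The key one-step ingredient I would establish next is the conditional analogue of Theorem \ref{thm:uniqueness}: for every $l \in \mathbb{N}$ and every $\mathbb{Q} \in \mathcal{H}_{\myS; c}$,
\[
\mathbb{E}_{\mathbb{Q}}\!\left\{E_{p_l, q_l; \lambda_l}(Y_{l + h}) \mid \mathcal{F}_l\right\} \leq 1 \quad \text{a.s.}
\]
On the set $\{c_l = 0\}$, the assumption $\lambda_l = 0$ gives $E_{p_l, q_l; \lambda_l} \equiv 1$ and the bound is trivial. On $\{c_l = 1\}$, the defining condition of $\mathcal{H}_{\myS; c}$ says that the conditional success probability $\pi_l := \mathbb{Q}(Y_{l + h} = 1 \mid \mathcal{F}_l)$ lies in the null set $H_{\myS}$ computed with $p = p_l(\omega)$, $q = q_l(\omega)$; since $p_l, q_l, \lambda_l$ are $\mathcal{F}_l$-measurable, the conditional expectation of $E_{p_l, q_l; \lambda_l}(Y_{l + h})$ reduces $\omega$-pointwise to the one-period expectation of Theorem \ref{thm:uniqueness} with fixed parameters, which yields the bound.

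Finally, I would iterate along the sub-filtration $\mathcal{G}_s^{(k)} := \mathcal{F}_{k + sh}$. Writing the elements of $I_k$ as $l_0 < l_1 < \cdots < l_{m-1}$ with $l_j = k + jh$, the factor $E_{p_{l_j}, q_{l_j}; \lambda_{l_j}}(Y_{l_{j + 1}})$ is $\mathcal{G}_{j + 1}^{(k)}$-measurable while all earlier factors are $\mathcal{G}_j^{(k)}$-measurable, so the tower computation displayed just before the proposition carries over verbatim to this sub-filtration and peels off one factor at a time, delivering $\mathbb{E}_{\mathbb{Q}}(M_k) \leq 1$; averaging in $k$ then gives $\mathbb{E}_{\mathbb{Q}}(e_T) \leq 1$. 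The step I expect to be most delicate is the $\omega$-pointwise application of Theorem \ref{thm:uniqueness} in the conditional bound, since $p_l, q_l$ are genuinely random: one has to fix a regular conditional distribution of $Y_{l + h}$ given $\mathcal{F}_l$ and check that outside a $\mathbb{Q}$-null set the one-period hypothesis truly holds with the realized values of $p_l, q_l$.
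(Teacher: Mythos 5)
Your proposal is correct and follows essentially the same route as the paper: decompose $e_T$ into the $h$ products along arithmetic progressions mod $h$, show each is a nonnegative supermartingale with respect to the thinned sub-filtration $(\mathcal{F}_{k+sh})_s$ via the conditional one-step bound $\mathbb{E}_{\mathbb{Q}}\{E_{p_l,q_l;\lambda_l}(Y_{l+h})\mid\mathcal{F}_l\}\leq 1$ (splitting on $\{c_l=0\}$ and $\{c_l=1\}$ exactly as in the displayed computation preceding the proposition), and then average. Your explicit attention to the $\omega$-pointwise application of Theorem \ref{thm:uniqueness} with random $p_l,q_l$ is a detail the paper leaves implicit, but it does not change the argument.
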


Proposition \ref{prop:combination} is an analogous result to Theorem \ref{thm:uniqueness} in the sense that it only characterizes possible e-values for testing forecast dominance, but the parameters $\lambda_t$ could be any adapted sequence $(\lambda_t)_{t \in \mathbb{N}}  \subset [0,1]$. E-values for dominance testing under the conditions $(c_t)_{t \in \mathbb{N}}$ are obtained by setting all e-values where the condition is not satisfied to $1$. The forecast user may, and in fact, has to, tune the $(\lambda_t)_{t \in \mathbb{N}}$ in order to attain a good power against a given alternative. Recall that at any $t$, the $\lambda_t$ may be a function of all the forecasts and observations before time $t$. Instead of the parameters $\lambda_t$, it is usually more intuitive to think about an alternative probability $\eta_t$ for the event $Y_{t + h} = 1$, and then directly use the GROW e-values $E^{\eta_t}_{p_t, q_t}$ constructed in Theorem \ref{thm:grow}. In that respect, testing forecast dominance with e-values differs from p-value based tests of a zero score difference, which do not require the user to explicitly specify an alternative hypothesis. In the applications in Sections \ref{sec:simulations} and \ref{sec:case}, we will give guidance on the selection of alternative hypotheses, and show that reasonable power may be attained with simple heuristic methods.

As a side remark, choosing an alternative hypothesis for e-values in sequential forecast dominance testing is similar to the conditional predictive ability tests by \cite{Giacomini2006}, where $\mathcal{F}_t$-measurable test functions are used to weight score differences and improve power. While the selection of the test functions in the Giacomini-White test is delicate, because they may have an impact on the variance estimates and the finite-sample validity of the tests, e-values remain valid under any choice of adapted weights $(\lambda_t)_{t \in \mathbb{N}}$.

Our last theoretical result states that the e-values $e_T$ constructed above are also valid when $T$ is replaced by a stopping time $\tau$. For $h = 1$, this is a consequence of the fact that $(e_t)_{t \geq 2}$ is a non-negative supermartingale \citep[see][Section 3]{Ramdas2020}.

\begin{proposition} \label{prop:stopping}
	Let $\tau \in \mathbb{N}$ be a stopping time. Then under the assumptions of Proposition \ref{prop:combination},
	\[
	\mathbb{E}_{\mathbb{Q}}(e_{\tau+h-1}) \leq 1, \ \mathbb{Q} \in \mathcal{H}_{\myS}.
	\]
\end{proposition}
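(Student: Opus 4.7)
The plan is to decompose $e_{\tau+h-1}$ into an average of $h$ products, each evolving along its own sub-filtration spaced by $h$ time steps, and then to invoke Doob's optional stopping theorem for non-negative supermartingales on each piece.

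Concretely, for each $k \in \{1,\dots,h\}$ I would introduce the sub-filtration $\mathcal{G}^{(k)}_s := \mathcal{F}_{k+hs}$ and the process
\[
M^{(k)}_0 := 1, \qquad M^{(k)}_s := \prod_{j=0}^{s-1} E_{p_{k+hj}, q_{k+hj};\lambda_{k+hj}}(Y_{k+h(j+1)}) \ \ (s \geq 1).
\]
Since $p_t, q_t, \lambda_t$ are $\mathcal{F}_t$-measurable, the conditional distribution of $Y_{t+h}$ given $\mathcal{F}_t$ is Bernoulli with some $\mathcal{F}_t$-measurable parameter $\pi_t$, and under $\mathbb{Q}\in\mathcal{H}_{\myS}$ one has $\pi_t \in H_{\myS}(p_t,q_t)$ almost surely. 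A direct computation from the explicit form of $E_{p,q;\lambda}$ in \eqref{eq:representation}, together with \eqref{eq:h0s} for the case $c_t=1$, then yields $\mathbb{E}_{\mathbb{Q}}\{E_{p_t,q_t;\lambda_t}(Y_{t+h}) \mid \mathcal{F}_t\} \leq 1$ a.s., which is exactly the supermartingale increment condition for $(M^{(k)}_s, \mathcal{G}^{(k)}_s)$.

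Next, I would define the reduced stopping times $\tau^{(k)} := \lfloor(\tau+h-1-k)/h\rfloor$. Unwinding the definition of $I_k(\tau+h-1)$ in Proposition~\ref{prop:combination} gives $I_k(\tau+h-1)=\{k+hs : s=0,\dots,\tau^{(k)}-1\}$, so
\[
e_{\tau+h-1} = \frac{1}{h}\sum_{k=1}^{h} M^{(k)}_{\tau^{(k)}}.
\]
A short verification shows $\{\tau^{(k)} \leq s\} = \{\tau \leq k+sh\} \in \mathcal{F}_{k+sh} = \mathcal{G}^{(k)}_s$, so each $\tau^{(k)}$ is a $(\mathcal{G}^{(k)}_s)$-stopping time, and it is a.s.\ finite because $\tau$ is. Applying Doob's optional stopping theorem to the non-negative supermartingale $(M^{(k)}_s)$ at the bounded times $\tau^{(k)}\wedge n$ and passing to the limit with Fatou's lemma yields $\mathbb{E}_{\mathbb{Q}}(M^{(k)}_{\tau^{(k)}}) \leq 1$ for every $k$. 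Summing and dividing by $h$ finishes the proof.

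The main technical point is the conditional one-step inequality $\mathbb{E}_{\mathbb{Q}}\{E_{p_t,q_t;\lambda_t}(Y_{t+h}) \mid \mathcal{F}_t\} \leq 1$, which lifts the one-period statement of Theorem~\ref{thm:uniqueness} into the sequential filtration; everything else is index bookkeeping and standard non-negative supermartingale machinery. Two small points deserve care: one must check that the denominator $|\myS(p_t,\mathbbm{1}\{p_t>q_t\})-\myS(q_t,\mathbbm{1}\{p_t>q_t\})|$ appearing in \eqref{eq:representation} is a.s.\ strictly positive, which follows from strict consistency of $\myS$ together with $p_t\neq q_t$ whenever $c_t=1$, and one must handle the convention $E_{p_t,q_t;0} \equiv 1$ on $\{c_t=0\}$ that is built into Proposition~\ref{prop:combination}, though this trivially preserves the supermartingale property.
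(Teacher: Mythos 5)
Your argument is correct and follows essentially the same route as the paper, whose proof of this statement (folded into the appendix proof of Proposition \ref{prop:combination}) likewise splits $e_{\tau+h-1}$ into $h$ nonnegative supermartingales along the coarser filtrations $(\mathcal{F}_{k+hs})_s$, converts $\tau$ into a stopping time for each of them, and applies optional stopping before averaging. Your $\tau^{(k)}=\lfloor(\tau+h-1-k)/h\rfloor$ is just a reindexed version of the paper's $f_k(\tau)$, and the one-step inequality $\mathbb{E}_{\mathbb{Q}}\{E_{p_t,q_t;\lambda_t}(Y_{t+h})\mid\mathcal{F}_t\}\leq 1$ you isolate is exactly the ingredient the paper uses to establish the supermartingale property.
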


To understand validity under optional stopping intuitively, recall that at time $t$, the forecast user has to determine the parameter $\lambda_{t}$ in the e-value $E_{p_t, q_t; \lambda_t}(Y_{t + h})$. Optional stopping at $t_0$ corresponds to setting $\lambda_t \equiv 0$, or equivalently $E_{p_t, q_t; \lambda_t}(Y_{t + h}) \equiv 1$, for $t \geq t_0$, i.e.~ignoring all observations starting from time $t_0 + h$. In the case $h = 1$, this allows the forecast user to stop evaluation at any time, since $\lambda_{t}$ in $E_{p_t, q_t; \lambda_t}(Y_{t+1})$ is defined at the same time as $Y_{t}$ is observed. However, when $h > 1$, the coefficients $\lambda_t$ in $E_{p_t, q_t; \lambda_t}(Y_{t + h})$ for $t = t_0 - h + 1, \dots, t_0 - 1$ have already been determined in the past and may not be set to zero at $t_0$, since they must be $(\mathcal{F}_t)_{t \in \mathbb{N}}$-adapted. This implies that the stopped e-value depends on the unknown, future observations $Y_{t_0 + 1}, \dots, Y_{t_0 + h - 1}$, so it is not deterministic at time $t_0$.

In the case $h = 1$, optional stopping is a powerful strategy when the goal is to assess forecast superiority at a significance level $\alpha \in (0,1)$, since the stopping time
\begin{equation*}
	\tau_{\alpha} = \min\{T, \, \inf(t \geq 2: e_t \geq 1/\alpha)\}
\end{equation*}
allows to reject the null hypothesis as soon as the sequential e-value $e_t$ exceeds $1/\alpha$. If $h > 1$, one may similarly define
\[
\tau_{\alpha, h} = \min\Big(T, \, \inf\big[t \geq h + 1: e_t \geq \max_{j = t - h + 1, \dots, t - 1} \!\!\!\! E_{p_{j}, q_{j}; \lambda_j}\{\one(p_{j} > q_{j})\}^{-1}  /\alpha\big]\Big),
\]
which guarantees that when stopping at $t_0$, the level $1/\alpha$ is exceeded no matter what values $Y_{t_0 + 1}, \ldots, Y_{t_0 + h - 1}$ take; see Appendix \ref{app:stopping}. Instead of specifying a significance level $\alpha$ in advance, one may as well transform the sequence $(e_t)_{t \in \mathbb{N}}$ into so-called anytime valid p-values, which are valid simultaneously for all $t_0 \geq h + 1$ \citep[see][Section 3.1]{Ramdas2020}. For $h = 1$, an anytime-valid p-value is given by $p_{t_0} = \min\{1, \inf_{s = 1, \dots, t_0} 1/e_{s}\}$. For $h > 1$, one may apply the same correction as in the stopping time $\tau_{\alpha,h}$, namely, define $p_{t_0} = \min(1, \inf_{s = 1, \dots, t_0} [\max_{j = s - h + 1, \dots, s - 1} \! E_{p_{j}, q_{j}; \lambda_j}\{\one(p_{j} > q_{j})\}^{-1}/e_{s}])$

\section{Simulation examples} \label{sec:simulations}

\subsection{Basic properties} \label{sec:basics}
For the simulation examples in this and in the next subsection, we will transform e-values $E$ into p-values by taking their inverse $1/E$, so that direct comparisons with p-values are possible. Further variations of these simulation examples are presented in \ref{app:simulation}. An \textsf{R} package for the proposed methods and replication material for all results in this article are available on GitHub (\url{https://github.com/AlexanderHenzi/eprob}).

In the first example, for varying $\mu \in (0,1)$, we simulate independent forecasts $p_t, \, q_t \sim \mathrm{Unif}(0,1)$, define $\pi_t = \mu q_t + (1-\mu)p_t$, and generate independent Bernoulli observations $Y_{t + 1}$ with mean $\pi_t$ conditional on $p_t, q_t$. This represents a situation where forecasters only have access to partial information and both forecasts are not calibrated, i.e.~$\mathbb{P}(Y_{t+1} = 1 \mid p_t) \neq p_t$ and $\mathbb{P}(Y_{t+1} = 1 \mid q_t) \neq q_t$. We choose $\myS$ to be the Brier score, so that $p_t$ outperforms $q_t$ if and only if $\pi_t \in [0, (p_t + q_t)/2]$ if $p_t < q_t$ or $\pi_t \in [(p_t + q_t)/2, 1]$ if $p_t > q_t$, i.e.~if and only if $\mu \leq 0.5$. When $\mu > 0.5$, the GROW e-value is obtained by choosing $\pi_t$ as alternative hypothesis probability, but in practice, $\pi_t$ is not known. The forecast user might assume that the true probability of $Y_{t + 1} = 1$ lies somewhere in between $(p_t + q_t)/2$ and $q_t$, and choose a convex mixture $\eta_t(\xi) = \xi (p_t + q_t)/2 + (1-\xi)q_t$ with some $\xi \in (0,1)$ as alternative. Proposition \ref{prop:combination} implies that for $k \in \mathbb{N}$ and $\xi_1, \dots, \xi_k \in (0,1)$,
\[
e_{t; \xi_j} = \prod_{i = 1}^t E^{\eta_i(\xi_j)}_{p_i, q_i}(Y_{i + 1}), \quad e_t = \frac{1}{k}\sum_{j = 1}^k e_{t;\xi_j}
\]
are e-values under $\mathcal{H}_{\myS}$. In Figure \ref{fig:sim1}, we compare the rejection rates at the 5\% level, corresponding to e-values greater or equal to 20, when the $\xi_j$ are $k$ equispaced weights in $(0,1)$ for $k = 1$ and $k = 5$, i.e.~$\xi_1 = 0.5$ if $k = 1$ and $\xi_l = l/6$, $l = 1, \dots, 5$, in the case $k = 5$. We computed both the unstopped e-value $e_T$ and the stopped variant $e_{\tau_{0.05}}$, and the e-values under alternatives $\eta_t = \pi_t$ and $\eta_t = q_t$. The rejection rates are compared to those of one-sided t-tests of the null hypothesis that the mean Brier score difference is non-positive. Additionally, we show the rejection rates when the p-value is used for optional stopping at given time points upon seeing a significant difference.

\begin{figure}[ht]
	\includegraphics[width=0.9\textwidth]{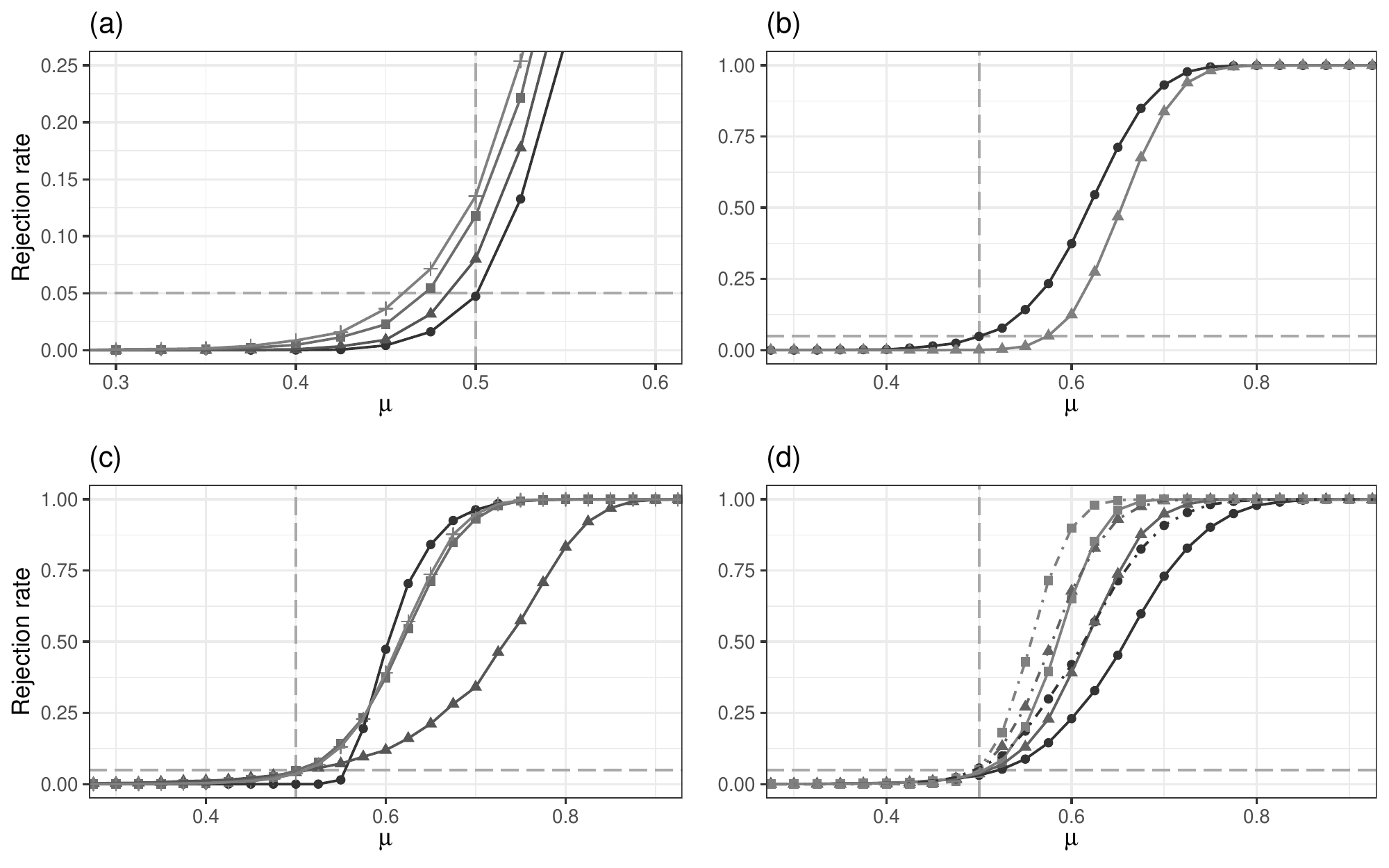}
	\centering
	\caption{Rejection rates of e-values and Student's t-test for the hypothesis that $p_t$ dominates $q_t$ with respect to the Brier score in the simulation of Section \ref{sec:basics}. Sample size is $T = 600$ for the panels (a)-(c) and the significance level is $\alpha = 0.05$ for all panels. (a) Rejection rate of t-test under optional stopping at $1, \, 3$, and $5$ equispaced time points (triangles, squares, crosses) in between $1$ and $T = 600$, and without optional stopping (dots). (b) Rejection rates of stopped (dots) and unstopped (triangles) e-value with $k = 1$. (c) Rejection rates of e-values with different alternative hypotheses ($q_t$: triangles, $\pi_t$: dots, $k = 1$: crosses, $k = 5$: squares). (d) Rejection rates of e-value ($k = 5$; normal lines) and t-test (without stopping; dot-dashed lines) for varying sample size $T$ (dots: $300$, triangles: $600$, squares: $1200$). \label{fig:sim1}}
\end{figure}

Our simulations illustrate the known fact that classical statistical tests are not valid under stopping. At the boundary of the null hypothesis, the rejection rate of the t-test amounts to $0.12$ for $T = 600$ and optional stops at times $150$, $300$ and $450$; given the number of optional stops, this phenomenon occurs independently of the sample size. As for the e-values, stopping ($e_{\tau_{0.05}}$) is always a more powerful but valid strategy compared to the e-value $e_T$. While the heuristic alternatives achieve a power close to the power under the correct alternative hypothesis, the misspecified hypothesis $\eta_t = q_t$ is clearly weaker. Interestingly, the correct alternative $\eta_t = \pi_t$ has a lower power than the heuristic alternatives close to the boundary of the null hypothesis. This is not an error: Specifying $\eta_t = \pi_t$ yields the optimal growth rate for the e-value, but this does not necessarily mean that it gives optimal power for the stopped e-value at the threshold $1/\alpha = 20$ in finite samples. The t-test generally achieves a higher power than the e-values, which has to be expected given the absence of assumptions on the data generating process and the validity under optional stopping; see also \cite{Waudabysmith2021}.

\subsection{Time series example} \label{sec:ts}

We simulate $Z_t$ from a moving average process $Z_t = \epsilon_t + \theta\sum_{j = 1}^4 \epsilon_{t-j}$, and define
\begin{equation} \label{eq:sim2}
	Y_t = \one\{Z_t > 0\}, \quad \pi_{t; h} = \mathbb{P}(Z_t > 0 \mid Z_{t - j}, \, j = h, \dots, 4), \ h = 1, \dots, 4.
\end{equation}
The probability $\pi_{t; h}$ corresponds to the ideal forecast at lag $h$. We compare $q_{t; h} = \pi_{t; h}$ and $p_{t; h} = \pi_{t; h + 1}$ for lags $h = 1, \dots, 3$, so that $q_{t; h}$ always outperforms $p_{t; h}$. With decreasing parameter $\theta$, serial dependence decreases and the forecast skill of $p_{t; h}$ and $q_{t; h}$ becomes similar. The alternative hypothesis for the e-values is the correct alternative $\eta_{t; h} = q_{t; h}$, so that the effect of a higher lag can be analyzed isolated from the question how to choose the alternative hypothesis. Rejection rates are compared to the Diebold-Mariano test at the 5\%-level.

\begin{figure}[t]
	\caption{Rejection rates of e-value (dots) and Diebold-Mariano test (triangles) in Example \eqref{eq:sim2} at the $5\%$ level for different sample sizes (panel rows) and lags $h$ (panel columns). \label{fig:sim2}}
	\bigskip
	\includegraphics[width=0.9\textwidth]{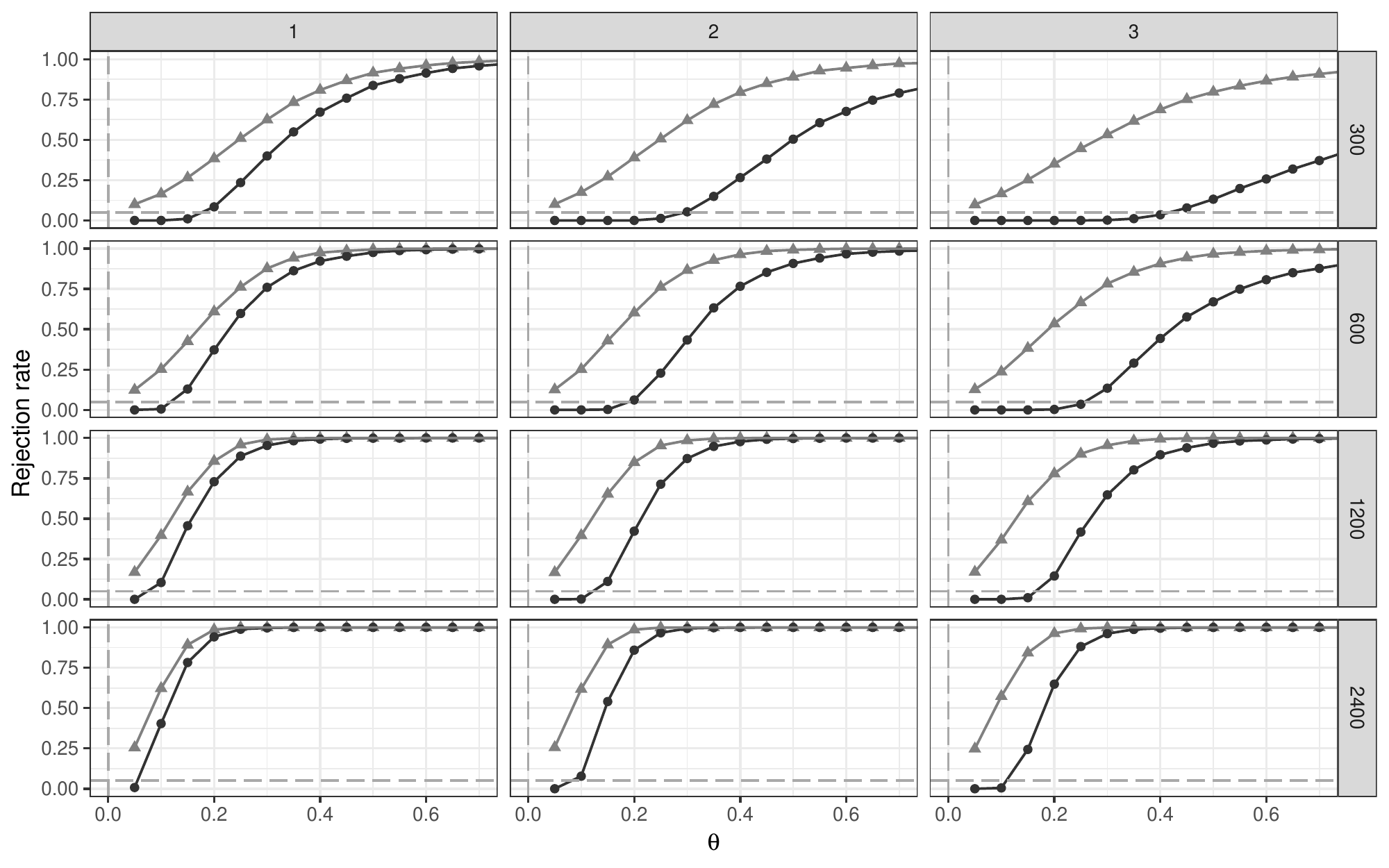}
\end{figure}

Figure \ref{fig:sim2} shows the rejection rates depending on the parameter $\theta$ for different sample sizes $T$. The e-values use the stopping time $\tau_{0.05}$ for lag $1$ and $\tau_{0.05; h}$ for lags $h = 2$ and $h = 3$. As in the previous simulations, the power of the e-values is below the p-values for the lag 1 forecasts, where the Diebold-Mariano test essentially coincides with the t-test. For lags $2$ and $3$, this difference increases, since the combination method for e-values becomes less powerful. With increasing lag, the rejection rates of both methods decrease, but the difference to lag 1 is smaller for the Diebold-Mariano test compared to the e-value. In this example, the Diebold-Mariano test is valid because the forecasts are ideal and the data generating process is stationary. For the e-values, validity is guaranteed without such assumptions, which may be of great advantage in applications.

\section{Case study} \label{sec:case}

\subsection{Data and methods}
\citet{Henzi2021} have compared postprocessing methods for precipitation 
forecasts with lag 1 to 5 days at the airports Brussels (BRU), Frankfurt (FRA), London Heathrow (LHR) and Zurich (ZRH). In their case study, probability of precipitation (PoP) forecasts have been evaluated with the Brier score, but no tests for significance of score differences have been performed. We will illustrate here how to apply e-values for probability forecasts, and compare the results to state-of-the-art forecast dominance tests.

A detailed description of the dataset and methods is given in Section 5 of
\citet{Henzi2021}, and we only summarise the key information here. The dataset covers the period from January 06, 2007 to January 01, 2017, and accounting for missing values, the numbers of available observations are 3406 for Brussels, 3617 for Frankfurt, 2256 for London and 3241 for Zurich airport. Postprocessing is applied to the ensemble forecasts of the European Centre for Medium-Range Weather Forecasts \citep[ECMWF;][]{Molteni1996, Buizza2005}, which are issued on a latitude-longitude grid and consist of a high resolution forecast, 50 perturbed ensemble forecasts at a lower resolution, and the control run for the perturbed forecasts. In simple words, ensemble forecasts account for uncertainty by running a numerical weather prediction (NWP) model several times, each time under slightly perturbated initial conditions, and each run of the model yields a different forecast, which together form a so-called ensemble \citep{Leutbecher2008}. Ensemble forecasts are usually subject to biases and dispersion errors, which can be corrected by estimating the conditional distribution of the weather variable given the NWP ensemble. This statistical procedure is known as postprocessing of ensemble forecasts \citep{Vannitsem2018}.

\citet{Henzi2021} propose isotonic distributional regression (IDR) as a benchmark for such postprocessing methods. IDR estimates conditional distributions nonparametrically and without any tuning parameters.
The method is not specifically tailored to forecasting precipitation, and one would expect that a parametric model designed for this purpose gives more precise forecasts. One such method is heteroscedastic censored logistic regression \citep[HCLR;][]{Messner2014}, which assumes that the square root of the precipitation follows a logistic distribution censored at zero. The implementation is as in \citet{Henzi2021}. While the covariates in IDR are only the high resolution forecast, the control forecast, and the ensemble mean, the HCLR model additionally includes a scale parameter depending on the ensemble standard deviation.

Different from the study in \cite{Henzi2021}, who use an expanding window for the postprocessing, we estimate both postprocessed forecasts on half of the data for each airport for simplicity, and keep the remaining half for validation.

\subsection{Hypothesis tests}
We illustrate the usage of e-values in the following hypothesis tests. Firstly, we try to reject the null hypothesis that IDR probability of precipitation forecasts are better than the HCLR PoP forecasts with respect to the Brier score. Secondly, we modify HCLR by dropping the scale parameter. It is expected that this variant, denoted by HCLR$_{-}$ subsequently, is outperformed by HCLR including the ensemble-dependent scale parameter, and also by IDR, since the models are based since both IDR and HCLR$_{-}$ assume a monotone relationship between the covariates and the PoP, but the nonparametric IDR can estimate a broader class of functions. And finally, we further investigate the effect of the scale parameter on HCLR predictions for high precipitation. Suppose a weather forecaster issues a warning if the probability that the precipitation exceeds a high threshold is more than $50\%$. As thresholds, we chose the empirical $90\%$ quantile of precipitation in the training data for each airport. Intuitively, the HCLR model should yield more accurate warnings than HCLR$_{-}$, because it includes the ensemble standard deviation as an uncertainty measure.

The first and second set of hypotheses are tested with the Brier score and the corresponding e-values. As alternative probability, we take the convex mixtures $\eta_t = 0.25p_t + 0.75 q_t$, which have been explored in Section \ref{sec:simulations}, denoting by $p_t$ the forecasting method that is expected to have a better performance than $q_t$ under the null hypothesis. The hypothesis about the extreme precipitation warnings is a conditional comparison with the conditions $c_t = \one\{\max(p_t, q_t) \geq 0.5\}$. For this hypothesis, instead of dominance with respect to the Brier score, we test the stronger hypothesis of forecast dominance with respect to all scoring rules. The rationale is that the forecast dominance hypothesis should be easily rejected if the HCLR model truly issues the better tail forecasts, and on the other hand, failing to reject may indicate that either even with data of 10 years it is not possible to clearly discriminate the quality of such warnings, or that the ensemble standard deviation does not bring a benefit. For this hypothesis we define $\eta_t = q_t$, assuming that the conditional event probabilities should be much closer to the ones issued by HCLR than by HCLR$_{-}$. No optional stopping is applied in all e-values.

For comparison, we also compute p-values for the significance of score differences. The first two hypotheses are tested with one-sided Diebold-Mariano tests (\citealp{Diebold1995}; see also \citealp{Giacomini2006}). To estimate the variance of the test statistics, we use the heteroskedasticity and autocorrelation consistent estimator with Bartlett weights, see \citet[Equation 2.18]{Lerch2017}. For testing dominance of the tail probability forecasts, the test by \citet{Yen2021} would allow arbitrary forecast lags, but it assumes strict stationarity. Since the sequence $c_t$ selects only particular instances, with possibly strongly varying time gaps in between, stationarity is highly questionable. We therefore apply the dominance test by \citet{Ehm2018}, which is valid under weaker assumptions but limited to lag $1$. Strictly speaking, both the Diebold-Mariano test and the forecast dominance test are valid under larger null hypotheses than the e-values, as they only require the average score difference between $p_t$ and $q_t$ to be non-positive, whereas the null hypothesis for the e-values asks for conditional superiority at each time point. A comparison is nevertheless interesting, since these two tests represent commonly used methods that are applied to test the significance of score differences.

\begin{table}
	\centering
	\caption{Brier scores for different probability of precipitation forecasting methods, and e-values ($E$) and p-values ($p$) for testing significance of score differences. The columns HCLR/IDR show e-values and p-values for tests of the null hypothesis that IDR probability of precipitation forecasts achieve a lower Brier score the HCLR forecasts; the interpretation is analogous for the other forecast pairs. \label{tab:casestudy}}
	\bigskip
	\resizebox{\columnwidth}{!}{%
		\begin{tabular}{lcrrr@{\hskip 0.75cm}rr@{\hskip 0.75cm}rr@{\hskip 0.75cm}rr}
			
			&	&	\multicolumn{3}{c}{Average Brier score} & \multicolumn{2}{c}{HCLR/IDR} & \multicolumn{2}{c}{IDR/HCLR$_{-}$} & \multicolumn{2}{c}{HCLR/HCLR$_{-}$} \\[0.5em]
			
			& Lag & IDR & HCLR & HCLR$_{-}$ & $E$ & $p$ & $E$ & $p$& $E$ & $p$\\[0.25em]
			
			BRU & $1$ & $0.107$ & $0.117$ & $0.118$ & $0$ & $0.9998$ & $> 100$ & $<10^{-4}$ & $> 100$ & $0.0702$\\ 
			& $2$ & $0.119$ & $0.123$ & $0.125$ & $0.01$ & $0.9471$ & $> 100$ & $0.0101$ & $13.602$ & $0.0294$\\ 
			& $3$ & $0.134$ & $0.133$ & $0.136$ & $0.425$ & $0.4405$ & $> 100$ & $0.1916$ & $15.185$ & $0.0019$\\ 
			& $4$ & $0.152$ & $0.145$ & $0.148$ & $4.804$ & $0.0138$ & $1.943$ & $0.9358$ & $5.165$ & $0.0074$\\ 
			& $5$ & $0.171$ & $0.161$ & $0.164$ & $16.969$ & $0.0002$ & $0.415$ & $0.9965$ & $3.436$ & $0.0003$\\[0.25em]
			FRA & $1$ & $0.109$ & $0.111$ & $0.114$ & $0$ & $0.7784$ & $> 100$ & $0.0213$ & $> 100$ & $<10^{-4}$\\ 
			& $2$ & $0.114$ & $0.119$ & $0.122$ & $0.054$ & $0.9643$ & $> 100$ & $0.0002$ & $> 100$ & $0.0004$\\ 
			& $3$ & $0.123$ & $0.127$ & $0.132$ & $0.078$ & $0.9352$ & $> 100$ & $0.0001$ & $26.569$ & $<10^{-4}$\\ 
			& $4$ & $0.147$ & $0.144$ & $0.147$ & $2.291$ & $0.0966$ & $9.618$ & $0.5245$ & $5.54$ & $0.0001$\\ 
			& $5$ & $0.166$ & $0.161$ & $0.163$ & $1.526$ & $0.0305$ & $2.362$ & $0.8871$ & $3.227$ & $0.0051$\\[0.25em]
			LHR & $1$ & $0.135$ & $0.138$ & $0.139$ & $0.029$ & $0.8136$ & $14.979$ & $0.1314$ & $2.845$ & $0.3721$\\ 
			& $2$ & $0.138$ & $0.143$ & $0.143$ & $0.188$ & $0.9189$ & $> 100$ & $0.0509$ & $2.868$ & $0.4369$\\ 
			& $3$ & $0.152$ & $0.154$ & $0.155$ & $0.734$ & $0.7549$ & $40.905$ & $0.1394$ & $2.488$ & $0.3400$\\ 
			& $4$ & $0.169$ & $0.167$ & $0.169$ & $1.429$ & $0.2455$ & $1.7$ & $0.5442$ & $1.744$ & $0.0785$\\ 
			& $5$ & $0.186$ & $0.181$ & $0.182$ & $1.577$ & $0.0753$ & $0.379$ & $0.9288$ & $1.118$ & $0.3216$\\[0.25em]
			ZRH & $1$ & $0.104$ & $0.108$ & $0.110$ & $0.003$ & $0.9306$ & $> 100$ & $0.0055$ & $61.747$ & $0.0003$\\ 
			& $2$ & $0.110$ & $0.112$ & $0.114$ & $0.116$ & $0.7219$ & $36.891$ & $0.0304$ & $10.276$ & $0.0001$\\ 
			& $3$ & $0.121$ & $0.118$ & $0.121$ & $1.516$ & $0.0892$ & $31.924$ & $0.4410$ & $5.098$ & $0.0001$\\ 
			& $4$ & $0.138$ & $0.132$ & $0.134$ & $4.069$ & $0.0027$ & $1.276$ & $0.9588$ & $2.771$ & $0.0015$\\ 
			& $5$ & $0.165$ & $0.156$ & $0.159$ & $15.151$ & $<10^{-4}$ & $0.842$ & $0.9978$ & $2.383$ & $0.0002$\\

		\end{tabular}
	}%
\end{table}

Tables \ref{tab:casestudy} and \ref{tab:casestudy2} show the e-values and one-sided p-values for the hypotheses described above, computed separately for each airport and forecast lag. The e-values are not transformed to p-values here. For interpretation, \citet[Section 3]{Vovk2020a} suggest the discrete scale such that e-values in $(0,1]$, $(1, 3.16]$, $(3.16, 10]$, $(10, 31.6]$, $(31.6, 100]$, and $(100, \infty)$ represent no, poor, substantial, strong, very strong, and decisive evidence against the null hypothesis, respectively. E-values larger than $100$ are not displayed to improve readability, but an untruncated variant of Table \ref{tab:casestudy} is contained in \ref{app:application} so that it is possible to update the e-values with more recent data. For all hypotheses, the p-values and e-values largely lead to the same conclusions. HCLR does not outperform IDR for PoP forecasts at lags 1 to 3, but for the airports Brussels and Zurich there is substantial to strong evidence that it achieves lower Brier scores at the lags 4 and 5. HCLR$_{-}$ is clearly outperformed by the more complex variant with the ensemble-dependent scale parameter at short lags, and also for the longer lead times there is some evidence that including the scale parameter improves the forecasts, except for London airport. As for the difference between IDR and HCLR$_{-}$, both the e-values and the p-values suggest that IDR yields the better forecasts at lags 1 to 3, but at lags 4 and 5, there are no rejections of the null hypothesis. Figure \ref{fig:case} shows how the cumulative products of the e-values for the hypotheses tests at lag 1 evolve over time. If the goal was to accumulate strong evidence against the hypotheses, say exceeding the level $10$, then the hypothesis that IDR outperforms HCLR$_{-}$ could already be rejected with only $9\%$ or $27\%$ of the data, respectively, which is where the corresponding lines first cross the level $10$. For Zurich airport, rejection happens at $85\%$ of the total sample size.

Interestingly, in the comparison of HCLR and HCLR$_{-}$ for Brussels, lag 1, the p-value is non-significant $(0.07$) but the e-value gives decisive evidence ($> 100$). We attribute this to the different null hypotheses of the tests: The mean difference in Brier score is only $0.001$ with an estimated standard deviation of $0.03$, giving only little evidence against the null hypothesis of the Diebold-Mariano test. However, the null hypothesis for the e-value is smaller, requiring that HCLR$_{-}$ outperforms HCLR \emph{at all time points}. Even if the score differences are only small, evidence eventually accumulates over the whole time period; see the rightmost panel of Figure \ref{fig:case}. The fact the e-values in the comparison HCLR/HCLR$_{-}$ decrease with the forecast lag is an effect of the less powerful merging method for e-values with higher lag.

\begin{figure}[t]
	\centering
	\caption{E-values for the hypotheses tests at lag 1 for Brussels (dots), Frankfurt (triangles), London (squares), and Zurich (crosses). The abbreviation of the hypotheses is as in Table \ref{tab:casestudy}. \label{fig:case}}
	\bigskip
	\includegraphics[width=0.9\textwidth]{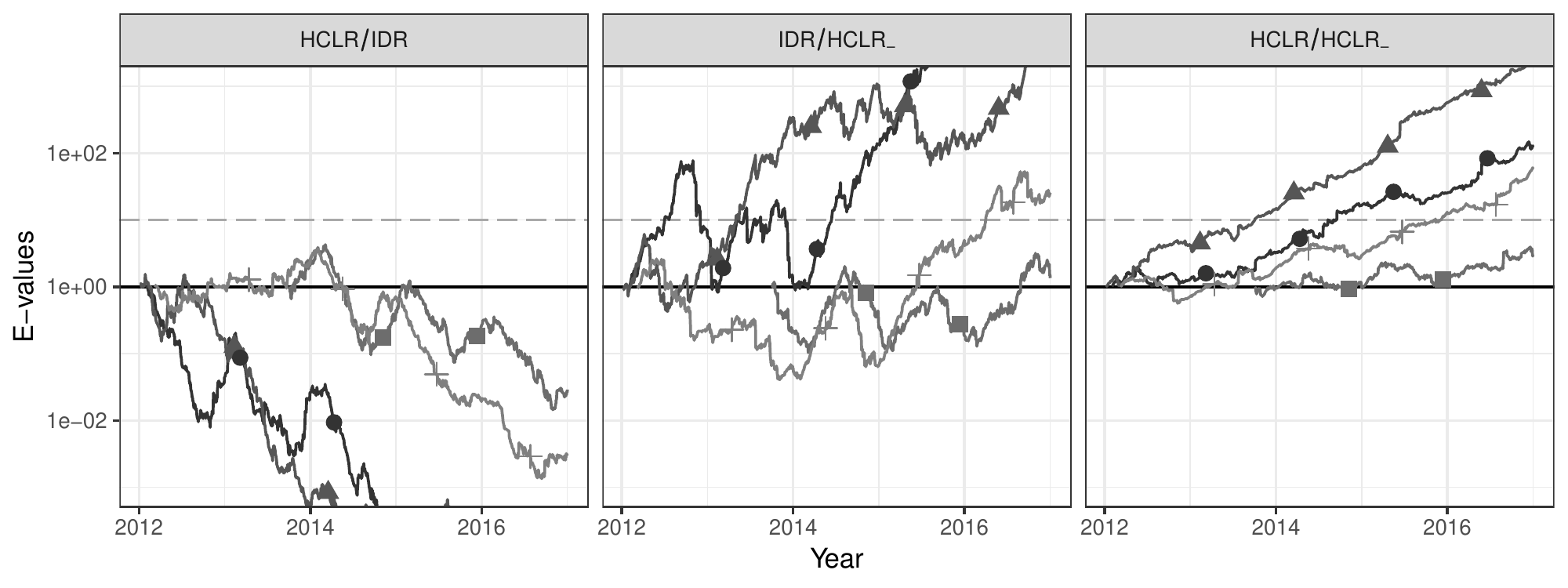}
\end{figure}

In the comparisons of extreme precipitation warnings, the p-value gives some evidence against the null hypothesis for Brussels airport, and the corresponding e-value is decisive, $E = 3703$. For the other lag 1 forecasts, both p-values and e-values do not indicate that including the ensemble standard deviation brings a benefit. As for the higher lags, for London and Zurich airport there is no evidence that HCLR outperforms HCLR$_{-}$, and for Brussels and Frankfurt airport there is only evidence at lags 2 and 3. Overall, the evidence in favour of the HCLR model for issuing extreme precipitation warnings as compared to HCLR$_{-}$ is surprisingly weak.

\begin{table}
	\centering
	\caption{Sample sizes, e-values and p-values for the comparison of tail probability forecasts. The sample size is the number of observations where the condition $\min(p_t, q_t) \geq 0.5$ holds. \label{tab:casestudy2}}
	\bigskip
	\resizebox{\columnwidth}{!}{%
		\begin{tabular}{lllllllll}
			
			& \multicolumn{2}{l}{Brussels} & \multicolumn{2}{l}{Frankfurt} & \multicolumn{2}{l}{London} & \multicolumn{2}{l}{Zurich} \\[0.5em]
			
			Lag & $n$ & $E \ (p)$ & $n$ & $E \ (p)$ & $n$ & $E \ (p)$ & $n$ & $E \ (p)$ \\[0.25em]
			$1$ & $116$ & $> 100 \ (0.050) \quad$ & $79$ & $0.175 \ (0.814) \quad$ & $72$ & $0.45 \ (0.724) \quad$ & $92$ & $0.047 \ (0.892)$\\ 
			$2$ & $88$ & $23.409$ & $87$ & $3.327$ & $69$ & $1.332$ & $99$ & $2.961$\\ 
			$3$ & $68$ & $10.704$ & $62$ & $3.542$ & $60$ & $1.429$ & $75$ & $0.567$\\ 
			$4$ & $49$ & $2.338$ & $53$ & $1.166$ & $39$ & $0.868$ & $52$ & $0.773$\\ 
			$5$ & $28$ & $1.029$ & $26$ & $1.033$ & $30$ & $1.077$ & $36$ & $1.073$\\ 
		\end{tabular}
	}%
\end{table}

\section*{Acknowledgement}
A.~Henzi and J.~F.~Ziegel gratefully acknowledge financial support from the Swiss National Science Foundation. The authors are grateful to Ruodu Wang for introducing them to e-values for hypothesis testing, and to Tobias Fissler and Aaditya Ramdas for helpful comments.

\bibliographystyle{plainnat}
\bibliography{VSIFP_biblio_03}

\appendix

\section{Proofs}
\subsection{Proofs in Section 3}
\begin{proof}[Proof Theorem \ref{thm:uniqueness}]
	If $E(y)$ is of the stated form, then $E(y) \geq E\{\one(p>q)\} = 1-\lambda \geq 0$, and one can easily verify that $E$ has the given null hypothesis. Assume that $p < q$; the case $p > q$ is analogous. Define $d_{p,q}(y) = \myS(p,y) - \myS(q,y)$ and, for $\pi \in [0,1]$,
	\begin{equation} \label{eq:expscore}
		f(\pi) = \mathbb{E}_{\pi}\{d_{p,q}(Y)\} = (1-\pi) d_{p,q}(0) + \pi d_{p,q}(1).
	\end{equation}
	The elementary score representation \eqref{eq:mixture} and $\nu\{[p,q)\} > 0$ imply that $d_{p,q}(0) < 0 < d_{p,q}(1)$, so $f(\pi)$ is strictly increasing in $\pi$ and equal to zero for some $\pi_0 \in (0,1)$. Let $E = E(y)$ be an e-value under $H_{\myS}$ with alternative $H_{\myS}^c$, i.e. $E(y) \geq 0$ and
	\begin{align}
		& \mathbb{E}_{\pi}\{E(Y)\} = (1-\pi) E(0) + \pi E(1) \leq 1 \iff f(\pi) \leq 0. \label{eq:evalueCond}
	\end{align}
	Condition \eqref{eq:evalueCond} implies that $\mathbb{E}_{\pi}\{E(Y)\} = 1$ if and only if $f(\pi) = 0$, which yields
	\begin{equation} \label{eq:ratio}
		\frac{d_{p,q}(0)}{d_{p,q}(1) - d_{p,q}(0)} = \frac{E(0) - 1}{E(1) - E(0)}.
	\end{equation}
	Rearranging this equation gives $E(1) = 1 - \{1 - E(0)\} \cdot d_{p,q}(1) / d_{p,q}(0)$. It follows from \eqref{eq:evalueCond} and \eqref{eq:ratio} that $E(0) \in (0,1)$, so with $\lambda = 1 - E(0)$, we obtain $E(y) = 1 + \lambda d_{p,q}(y) / |d_{p,q}(0)|$. Similar arguments for the case $p > q$ show that in general,
	\[
	E(y) = 1 + \lambda \frac{d_{p,q}(y)}{|d_{p,q}\{\one(p > q)\}|}. 
	\]
\end{proof}

\begin{proof}[Proof Theorem \ref{thm:grow}]
	All e-values for the given null hypothesis are of the form \eqref{eq:representation}. To find the GROW e-value under the alternative that $Y = 1$ with probability $\pi_1$, we have to maximize
	\[
	\mathbb{E}_{\pi_1}[\log\{E_{p,q; \lambda}(Y)\}]
	= (1-\pi_1)\log\left[1 - \lambda \frac{d_{p,q}(0)}{d_{p,q}\{\one(p > q)\}} \right] + \pi_1 \log\left[1 - \lambda \frac{d_{p,q}(1)}{d_{p,q}\{\one(p > q)\}}\right],
	\]
	where again $d_{p,q}(y) = \myS(p,y) - \myS(q,y)$. Let $p < q$; the case $p > q$ is analogous. Under this assumption $d_{p,q}(0) < 0 < d_{p,q}(1)$, and $g(\lambda) = \mathbb{E}_{\pi_1}[\log\{E_{p,q; \lambda}(Y)\}]$ is continuous in $\lambda$ with $g(0) = 0$ and $\lim_{\lambda \rightarrow 1} g(\lambda) = -\infty$, so a maximum is attained at some $\lambda \in [0,1)$. Define $h = d_{p,q}(1)/d_{p,q}(0) < 0$, so that
	\[
	g(\lambda) = (1-\pi_1) \log (1-\lambda) + \pi_1 \log(1-\lambda h), \quad
	g'(\lambda) = -\frac{1-\pi_1}{1-\lambda} - \pi_1 \frac{h}{1-\lambda h},
	\]
	and $g'(\lambda_0) = 0$ is equivalent to $\lambda_0 = \pi_1 + (1-\pi_1)/h$. By definition of $H_{\myS}$, $\pi_1 \not\in H_{\myS}$ holds if and only if $\mathbb{E}_{\pi_1}\{d_{p,q}(Y)\} > 0$, which is equivalent to $\pi_1 + (1-\pi_1)/h > 0$, so indeed $\lambda_0 > 0$ for all $\pi_1 \not\in H_{\myS}$, and
	\begin{align*}
		E_{p,q; \lambda_0}(0) & = 1 - \lambda_0 = (1-\pi_1)\left(1 - \frac{1}{h}\right) 
		= (1-\pi_1)\frac{d_{p,q}(1) - d_{p,q}(0)}{d_{p,q}(1)}, \\
		E_{p,q; \lambda_0}(1) & = 1 - \lambda_0 \frac{d_{p,q}(1)}{d_{p,q}(0)} = \pi_1 \frac{d_{p,q}(0) - d_{p,q}(1)}{d_{p,q}(0)}.
	\end{align*}
	With $d_{p,q}(y) = \int\one\{p \leq \theta < q\}(\theta - y) \, d\nu(\theta)$, it now follows that
	\[
	\frac{d_{p,q}(1) - d_{p,q}(0)}{d_{p,q}(1)} = \frac{-\nu\{[p,q)\}}{-\nu\{[p,q)\} + \int_{[p,q)} \theta \, d\nu(\theta)} = \frac{1}{1 - \kappa_{\nu}\{[p,q)\}\}}
	\]
	and $1 - h =(d_{p,q}(0) - d_{p,q}(1)) / d_{p,q}(0) = \kappa_{\nu}\{[p,q)\}^{-1} > \pi_1^{-1}$, which gives the desired result.
\end{proof}

\begin{proof}[Proof Theorem \ref{thm:allscores}]
	A direct computation shows that $H = [0, p]$ if $p < q$ and $H = [p, 1]$ if $p > q$, and that $\mathbb{E}_{\pi}\{E^{\pi_1*}_{p,q}(Y)\} \leq 1$ for all $\pi \in H$ and $\mathbb{E}_{\pi}\{E^{\pi_1*}_{p,q}(Y)\} > 1$ for $\pi \not\in H$. The result then follows by Theorem 1 of \cite{Gruenwald2020}, with $W_1$ being the Dirac measure of the point $\{\pi_1\}$.
\end{proof}

\begin{proof}[Proof of Proposition \ref{prop:combination}]
Recall that the process $(Y_t, p_t, q_t, \lambda_t)_{t \in \mathbb{N}}$ is adapted to $\mathfrak{F} = (\mathcal{F}_t)_{t \in \mathbb{N}}$. Let $h > 1$. For $k = 1, \dots, h$, define $I_k(t) = \{k + hs\colon s = 0, \dots \lfloor (t-k)/h\rfloor - 1\}$,
\[
M^{[k]}_t = \prod_{l \in I_k(t)} E_{p_{l},q_{l};\lambda_{l}}(Y_{l+h}), \quad
\mathfrak{F}^{[k]} = \left(\mathcal{F}_{\lfloor \frac{t-k}{h}\rfloor h + k}\right)_{t \in \mathbb{N}},
\]
with $\prod_{\emptyset} := 1$ and $\mathcal{F}_j := \{\Omega, \emptyset\}$ for $j \leq 0$. Then $e_t = \sum_{k=1}^h M^{[k]}_t/h$. For $k = 1, \dots, h$, the process $(M^{[k]}_t)_{t \in \mathbb{N}}$ is a nonnegative supermartingale with respect to $\mathfrak{F}^{[k]}$ for any $\mathbb{Q} \in \mathcal{H}_{\myS}$, and therefore satisfies $\mathbb{E}_{\mathbb{Q}}(M^{[k]}_{\tau[k]}) \le 1$ for any $\mathfrak{F}^{[k]}$-stopping time $\tau^{[k]}$. So
\[
\mathbb{E}_\mathbb{Q}\left(\frac{1}{h}\sum_{\ell=1}^h M^{[k]}_{\tau^{[k]}}\right) \le 1.
\]
for $\mathfrak{F}^{[k]}$-stopping times $\tau^{[k]}$, $k = 1,\dots,h$. If $\tau$ is an $\mathfrak{F}$-stopping time, then
\[
\left(\left\lfloor\frac{\tau-k-1}{h}\right\rfloor + 1\right)h + k =: f_k(\tau) \in \{\tau, \dots, \tau + h - 1\}
\]
is an $\mathfrak{F}^{[k]}$-stopping time. To see this, let $t = k + hs + j$ for $s \in \mathbb{N}_0$, $k \in \{1, \dots, k\}$, $j \in \{0, \dots, h-1\}$. Then $\lfloor(t-k)/h\rfloor h+k = k + hs$, and $f_k(\tau) \leq t$ if and only if $\tau \leq k+hs$, so
\[
\{f_{k}(\tau) \leq t\} = \{\tau \leq k + hs\} \in \mathcal{F}_{k+hs} = \mathcal{F}_{\lfloor \frac{t-k}{h}\rfloor h + k}.
\]
This implies that for any $\mathfrak{F}$-stopping time $\tau$, we obtain
\[
\mathbb{E}_{\mathbb{Q}}(M_{\tau + h - 1}) = \mathbb{E}_\mathbb{Q}\left(\frac{1}{h}\sum_{k=1}^h M^{[k]}_{f_k(\tau)}\right) \le 1, \quad \mathbb{Q} \in \mathcal{H}_{\myS},
\]
using the fact that $M_{t + h - 1} = \sum_{k=1}^h M^{[k]}_{f_k(t)}/h$ for $t \in \mathbb{N}$.
\end{proof}


\subsection{Optional stopping for lags $\mathbf{h > 1}$} \label{app:stopping}
In Section \ref{sec:sequential}, the stopping rule
\[
	\tau_{\alpha, h} = \min\Big(T, \, \inf\big[t \geq h + 1: e_{t} \geq \max_{j = t - h + 1, \dots, t - 1} \!\!\!\! E_{p_{j}, q_{j}; \lambda_j}\{\one(p_{j} > q_{j})\}^{-1}  /\alpha\big]\Big),
\]
is defined for e-values of the form
\[
	e_T = \frac{1}{h}\sum_{k = 1}^{h} \prod_{l \in I_k(T)}E_{p_l, q_l; \lambda_l}(Y_{l + h}),
\]
where $I_k(T) = \{k + hs: \, s = 0, \dots, \lfloor (T-k)/h\rfloor - 1\}$. Assume that at time $t$, it is observed that $e_{t} \geq \max_{j = t - h + 1, \dots, t - 1} \! E_{p_{j}, q_{j}; \lambda_j}\{\one(p_{j} > q_{j})\}^{-1}/\alpha$, and that optional stopping is applied, i.e.~$E_{p_s, q_s; \lambda_s}(Y_{t + s}) \equiv 1$ for $s \geq t$. The claim is that then $e_{t + h - 1} \geq 1/\alpha$ no matter what values $Y_{t + 1}, \dots, Y_{t + h - 1}$ take. Because $E_{p_t, q_t; \lambda_t}(Y_{t + h}) \equiv 1$, we have $e_{t + h - 1} = e_{t + h}$. For $k = 1, \dots, h$, let $s_k = k + h\lfloor(t-k)/h\rfloor$, so that $\{s_1, \dots, s_h\} = \{t - h + 1, \dots, t\}$. Then, using that $I_k(t + h) \setminus \{s_k\} = I_k(t)$,
\begin{align*}
	e_{t + h - 1} = e_{t + h} & = \frac{1}{h}\sum_{k = 1}^{h} \left\{E_{p_{s_k}, q_{s_k}; \lambda_{s_k}}(Y_{s_k + h})\prod_{l \in I_k(t + h) \setminus \{s_k\}}E_{p_l, q_l; \lambda_l}(Y_{l + h}) \right\} \\
	& \geq \frac{1}{h}\sum_{k = 1}^{h} \left[E_{p_{s_k}, q_{s_k}; \lambda_{s_k}}\{\one(p_{s_k} > q_{s_k})\}\prod_{l \in I_k(t)}E_{p_l, q_l; \lambda_l}(Y_{l + h}) \right] \\
	& \geq \min_{j = t - h + 1, \dots, t - 1} E_{p_j, q_j; \lambda_j}\{\one(p_j > q_j)\} \cdot \frac{1}{h}\sum_{k = 1}^{h} \prod_{l \in I_k(t)}E_{p_l, q_l; \lambda_l}(Y_{l + h}) \\
	& = \left[\max_{j = t - h + 1, \dots, t - 1} E_{p_j, q_j; \lambda_j}\{\one(p_j > q_j)\}^{-1}\right]^{-1} e_{t} \ \geq \ 1/\alpha.
\end{align*}

\section{Simulation examples: Additional figures} \label{app:simulation}

The simulation example in Section 4.1 in the article has been tested for 
robustness with respect to various parameters:
\begin{itemize}
\item[(i)] Significance levels: $\alpha$: $0.001$, $0.01$, $0.05$
\item[(ii)] Scoring functions: Brier score, spherical score, logarithmic score
\item[(iii)] Sample sizes: $150$, $300$, $600$, $1200$, $2400$
\item[(iv)] Tests for computing p-values: Student's t-test, Wilcoxon's signed rank test
\item[(v)] Alternative hypotheses for e-values: parameter $k$ (as explained in 
  Section 4.1 in the article)
\end{itemize}

For the spherical and the logarithmic score, the probability $\pi_t$ was 
computed in such a way that $\mu = 0.5$ corresponds to a score difference of zero, namely, with
$r_t = \mathbb{E}_{\nu}\big\{\theta \mid \theta \in [\min(p_t,q_t), \max(p_t,q_t))\big\}$,
we set $\pi_t = p_t$ for $\mu = 0$, $\pi_t = r_t$ for $\mu = 0.5$, $\pi_t = q_t$ for $\mu = 1$, and interpolate linearly in between these three points for the other $\mu$.

\medskip\noindent
Figure \ref{fig:sfig1} demonstrates that the rejection rates of the e-values are almost 
the same for all scoring functions.

\medskip\noindent
Figure \ref{fig:sfig2} shows how the rejection rates
vary with the alternative hypothesis for the e-value. In particular, it can be seen that 
the alternative $\pi_t$ is superior and $q_t$ is inferior for all sample sizes and significance levels. As for the alternatives with the parameter $k$, smaller $k$ give
higher rejection rates for small sample sizes and lower rejection rates for larger
samples.

\medskip\noindent
Figure \ref{fig:sfig3} shows that also the rejection rates of Student's t-test are essentially equal for the different scoring functions.

\medskip\noindent
In Figure \ref{fig:sfig4}, it can be seen that the rejection rates of Student's t-test and Wilcoxon's signed rank test for this simulation are almost equal.

\medskip\noindent
Figure \ref{fig:sfig5} shows that close to $\mu = 0.05$, Student's t-test under optional stopping has too high rejection rates independent of the significance level and the sample size.

\bigskip\noindent
The simulation example in Section 4.2 was tested with different significance levels and scoring functions.

\medskip\noindent
Figure \ref{fig:sfig6} shows that the choice of the scoring function has a minor influence
on the rejection rates for the sample sizes $300$ and $600$, and almost no effect for $1200$ and $2400$.

\medskip\noindent
Figure \ref{fig:sfig7} compares the rejection rates of the Diebold-Mariano test and 
the e-values for different significance levels.

\begin{figure}[ht]
\centering
\caption{Rejection rate of stopped e-value (alternative hypothesis with $k = 1$ as explained in the article) for Brier score (dots), spherical score (squares), logarithmic score (triangles), and different significance levels (columns) and sample sizes (rows). \label{fig:sfig1}}
\bigskip
\includegraphics[width=0.9\textwidth]{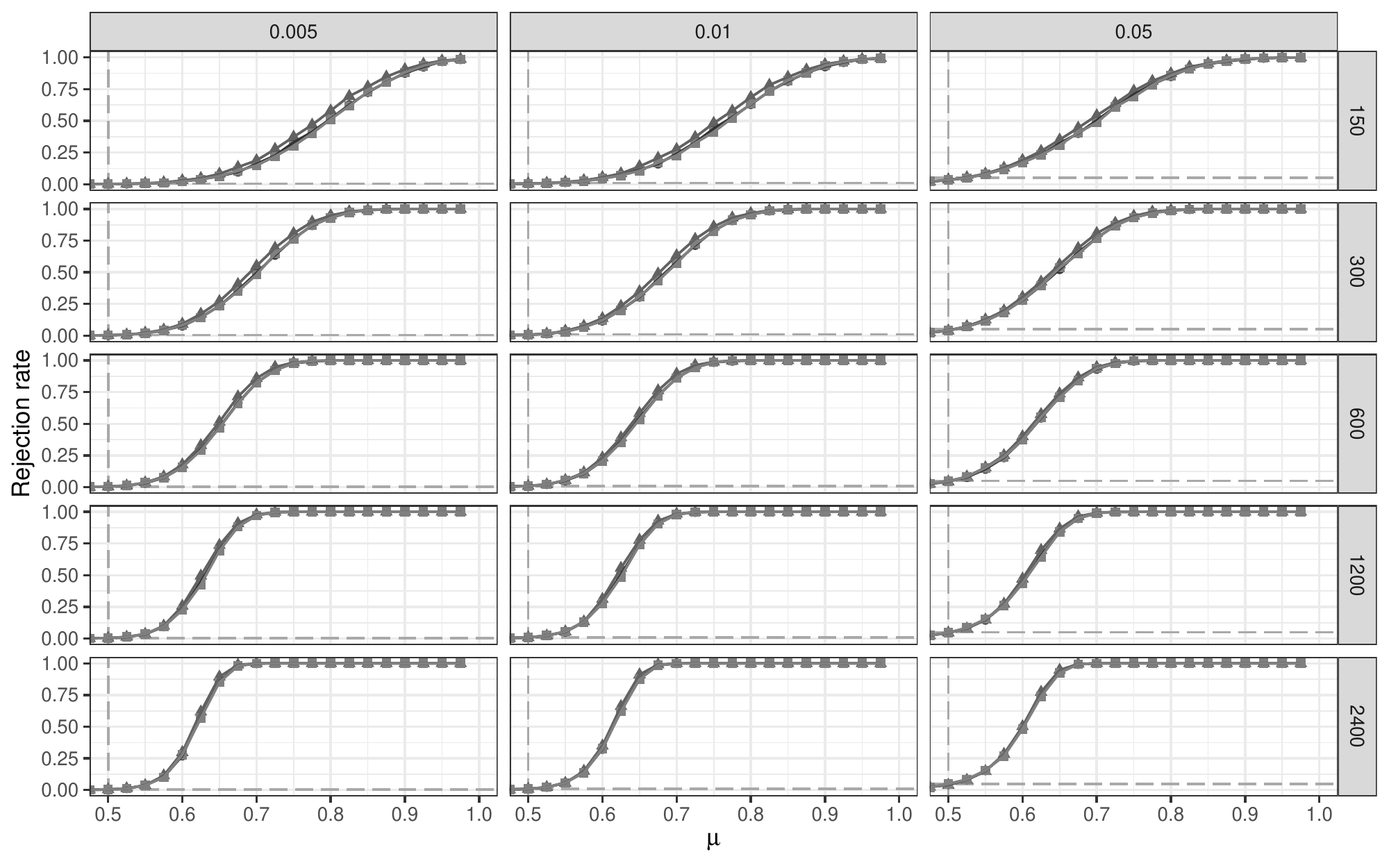}
\end{figure}

\begin{figure}[ht]
\centering
\caption{Rejection of stopped e-values based on Brier score for different alternative hypotheses and different sample sizes and significance levels. The alternatives are $\pi_t$ (dots), $q_t$ (triangles), $k = 1$ (filled squares), $k = 3$ (crosses), $k = 5$ (squares with cross). \label{fig:sfig2}}
\bigskip
\includegraphics[width=0.9\textwidth]{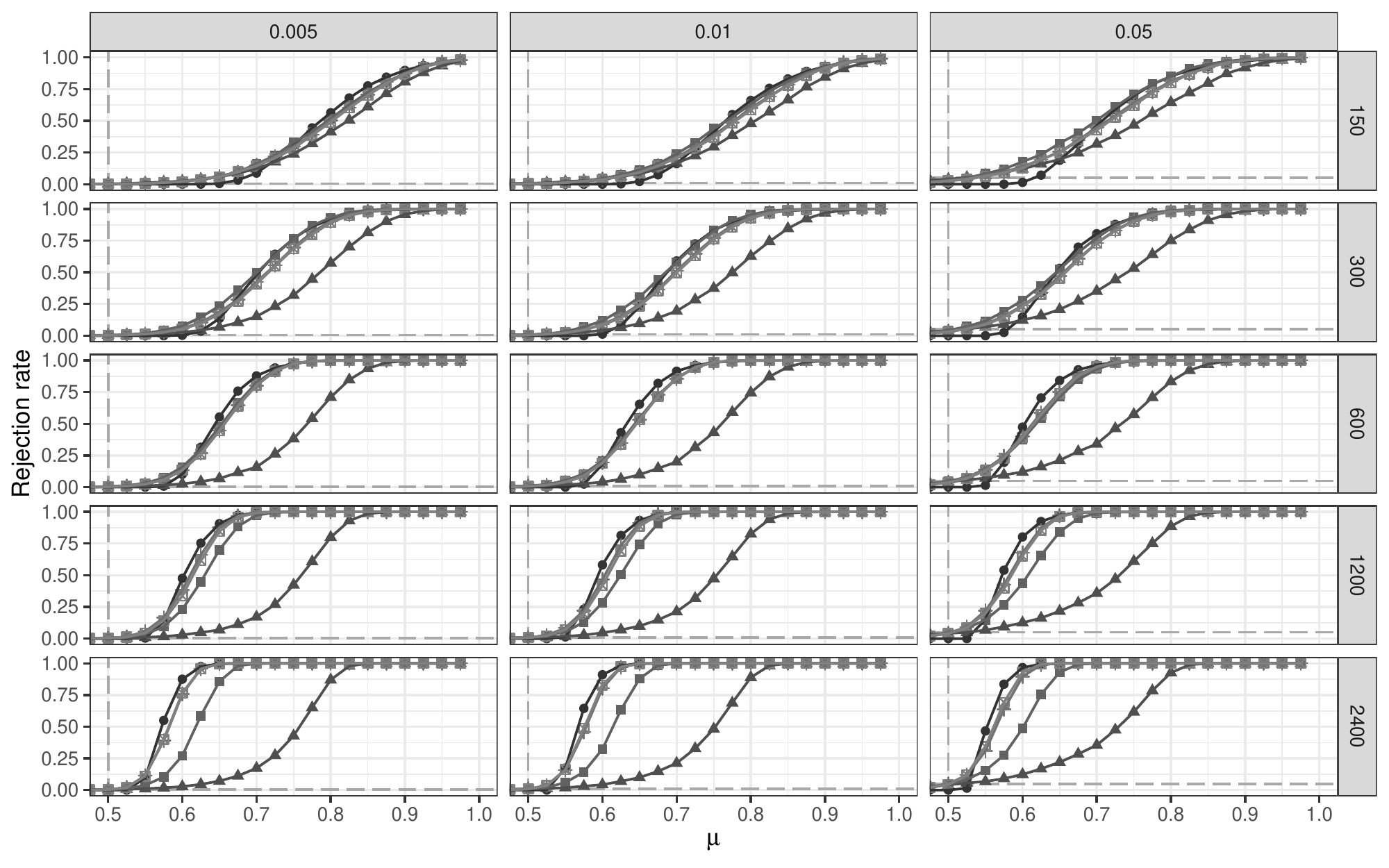}
\end{figure}

\begin{figure}[ht]
\centering
\caption{Rejection rate of Student's t-test for Brier score (dots), spherical score (squares), and logarithmic score (triangles) differences, for different significance levels and sample sizes. \label{fig:sfig3}}
\bigskip
\includegraphics[width=0.9\textwidth]{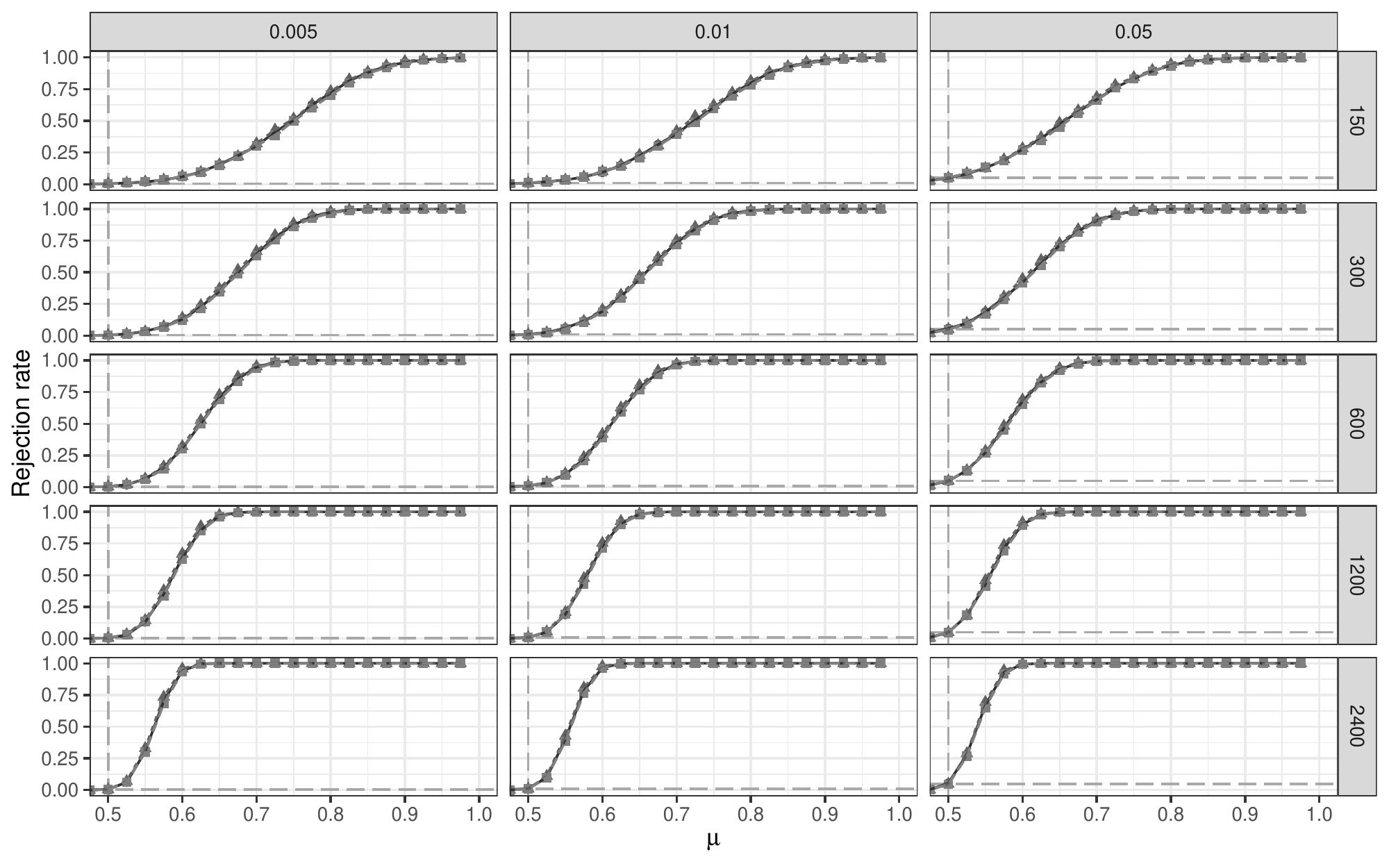}
\end{figure}

\begin{figure}[ht]
\centering
\caption{Rejection rates of Student's t-test (circles) and Wilcoxon's signed rank test (triangles) for Brier score differences, for different significance levels and sample sizes. \label{fig:sfig4}}
\bigskip
\includegraphics[width=0.9\textwidth]{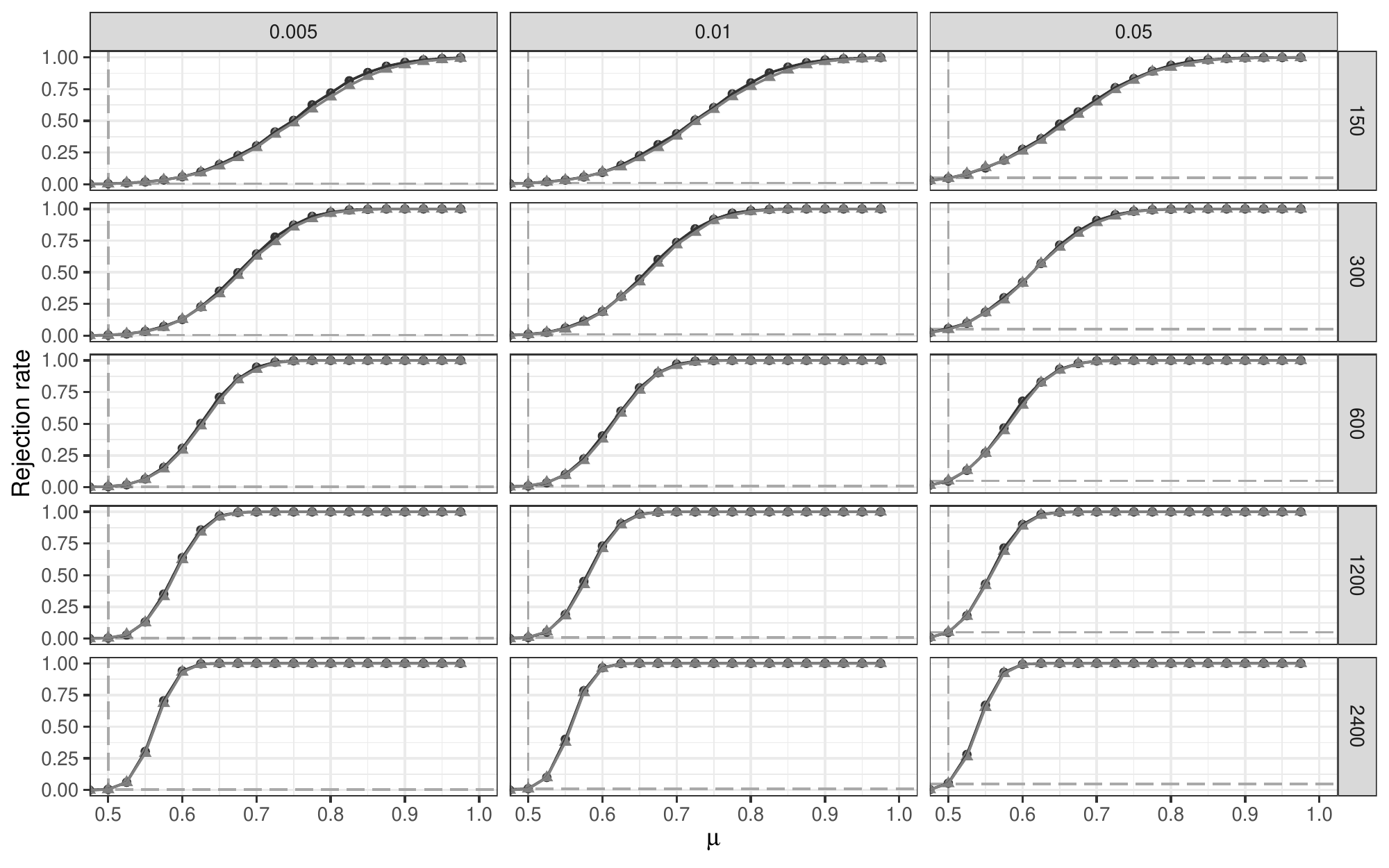}
\end{figure}

\begin{figure}[ht]
\centering
\caption{Rejection rates of Student's t-test under optional stopping, for different significance levels and sample sizes. Optional stops are included at $1$ (triangles), $3$ (squares) and $5$ equispaced time points in between $1$ and the sample size $T$. Dots show the rejection rates without optional stopping. \label{fig:sfig5}}
\bigskip
\includegraphics[width=0.9\textwidth]{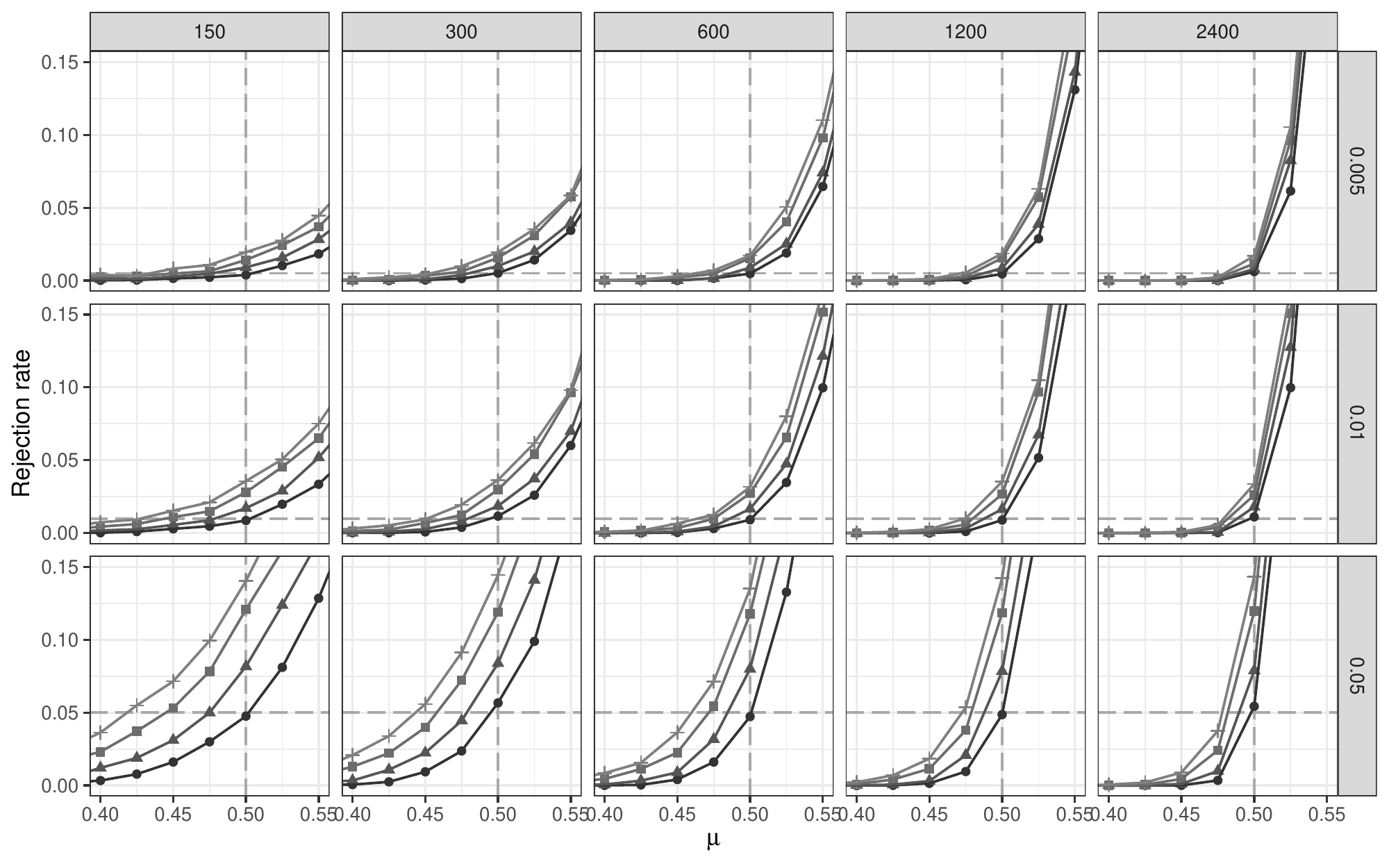}
\end{figure}

\begin{figure}[ht]
\centering
\caption{Rejection rates of Diebold-Mariano test (dashed lines) and e-values (normal lines) for the Brier score (dots), spherical score (squares), and the logarithmic score (triangles), and for different lags (columns) and sample sizes (rows). \label{fig:sfig6}}
\bigskip
\includegraphics[width=0.9\textwidth]{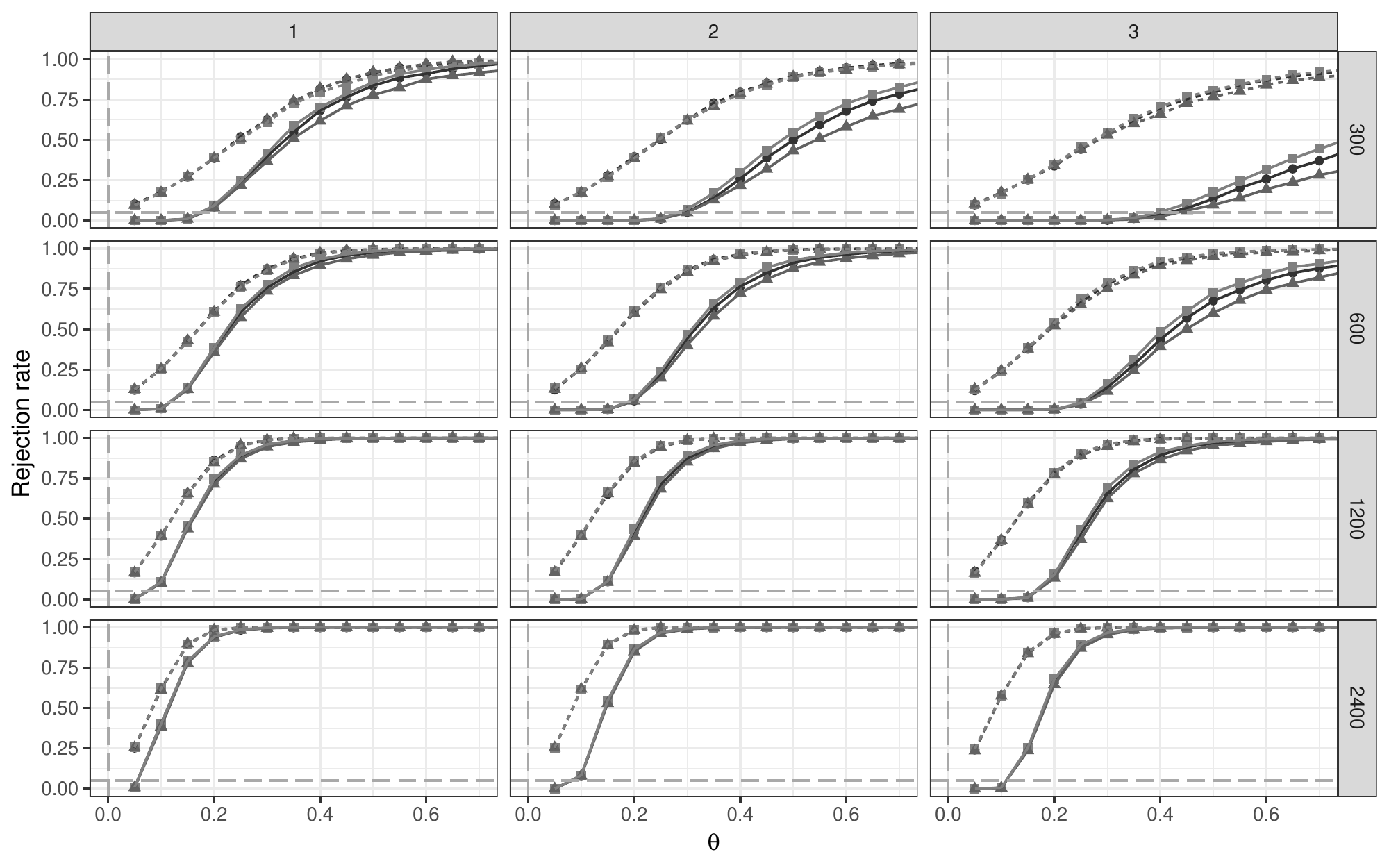}
\end{figure}

\begin{figure}[ht]
\centering
\caption{Rejection rates of the Diebold-Mariano test and E-values for the significance levels $0.005$ (dots), $0.01$ (triangles), and $0.05$ (squares), based on Brier score differences and a sample size of $600$. \label{fig:sfig7}}
\bigskip
\includegraphics[width=0.9\textwidth]{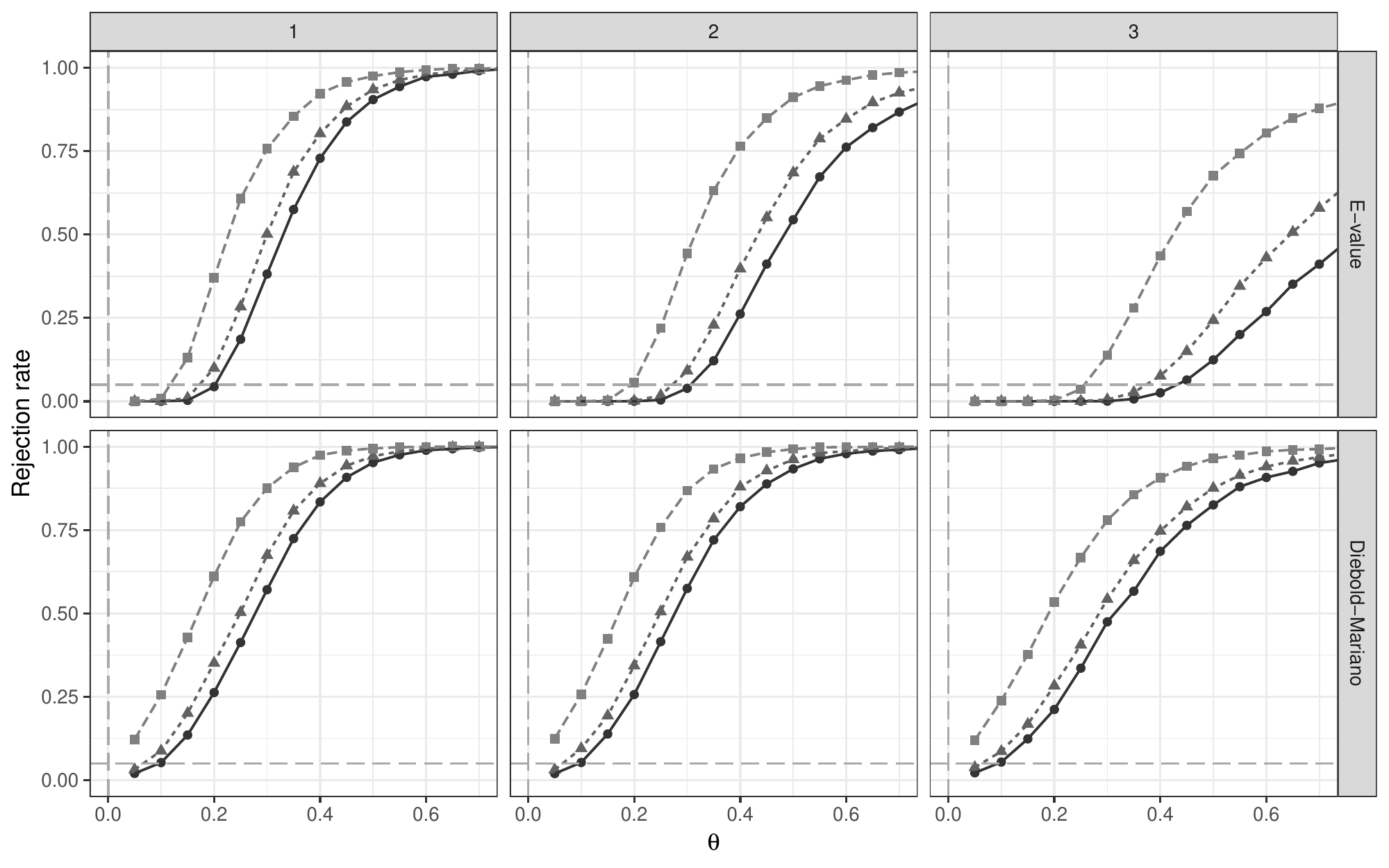}
\end{figure}

\section{Case study: Additional material} \label{app:application}
Table \ref{tab:casestudy_scientific} contains the e-vales and p-values of Table \ref{tab:casestudy} in scientific digit notation.

\begin{table}
\centering
\caption{Brier scores for different probability of precipitation forecasting methods, and e-values ($E$) and p-values ($p$) for testing significance of score differences. The columns HCLR/IDR show e-values and p-values for tests tests of the null hypothesis that IDR probability of precipitation forecasts achieve a lower Brier score the HCLR forecasts; the interpretation is analogous for the other forecast pairs. \label{tab:casestudy_scientific}}
\bigskip
\resizebox{\columnwidth}{!}{%
\begin{tabular}{lcrrrrrrrrr}

&	&	\multicolumn{3}{c}{Brier score} & \multicolumn{2}{c}{HCLR/IDR} & \multicolumn{2}{c}{IDR/HCLR$_{-}$} & \multicolumn{2}{c}{HCLR/HCLR$_{-}$} \\[0.5em]

& Lag & IDR & HCLR & HCLR$_{-}$ & $E$ & $p$ & $E$ & $p$& $E$ & $p$\\[0.25em]

BRU & $1$ & $0.107$ & $0.117$ & $0.118$ & $5.6\text{e}\!-\!08$ & $1.0\text{e+}00$ & $5.0\text{e+}09$ & $1.3\text{e}\!-\!05$ & $1.3\text{e+}02$ & $7.0\text{e}\!-\!02$ \\

& $2$ & $0.119$ & $0.123$ & $0.125$ & $9.5\text{e}\!-\!03$ & $9.5\text{e}\!-\!01$ & $2.2\text{e+}02$ & $1.0\text{e}\!-\!02$ & $1.4\text{e+}01$ & $2.9\text{e}\!-\!02$ \\

& $3$ & $0.134$ & $0.133$ & $0.136$ & $4.3\text{e}\!-\!01$ & $4.4\text{e}\!-\!01$ & $5.4\text{e+}02$ & $1.9\text{e}\!-\!01$ & $1.5\text{e+}01$ & $1.9\text{e}\!-\!03$ \\

& $4$ & $0.152$ & $0.145$ & $0.148$ & $4.8\text{e+}00$ & $1.4\text{e}\!-\!02$ & $1.9\text{e+}00$ & $9.4\text{e}\!-\!01$ & $5.2\text{e+}00$ & $7.4\text{e}\!-\!03$ \\

& $5$ & $0.171$ & $0.161$ & $0.164$ & $1.7\text{e+}01$ & $2.3\text{e}\!-\!04$ & $4.1\text{e}\!-\!01$ & $1.0\text{e+}00$ & $3.4\text{e+}00$ & $3.3\text{e}\!-\!04$ \\[0.25em]

FRA & $1$ & $0.109$ & $0.111$ & $0.114$ & $1.4\text{e}\!-\!06$ & $7.8\text{e}\!-\!01$ & $1.6\text{e+}11$ & $2.1\text{e}\!-\!02$ & $2.4\text{e+}03$ & $2.8\text{e}\!-\!06$ \\

& $2$ & $0.114$ & $0.119$ & $0.122$ & $5.4\text{e}\!-\!02$ & $9.6\text{e}\!-\!01$ & $1.3\text{e+}06$ & $2.3\text{e}\!-\!04$ & $2.5\text{e+}02$ & $4.2\text{e}\!-\!04$ \\

& $3$ & $0.123$ & $0.127$ & $0.132$ & $7.8\text{e}\!-\!02$ & $9.4\text{e}\!-\!01$ & $3.8\text{e+}04$ & $1.3\text{e}\!-\!04$ & $2.7\text{e+}01$ & $5.4\text{e}\!-\!06$ \\

& $4$ & $0.147$ & $0.144$ & $0.147$ & $2.3\text{e+}00$ & $9.7\text{e}\!-\!02$ & $9.6\text{e+}00$ & $5.2\text{e}\!-\!01$ & $5.5\text{e+}00$ & $5.9\text{e}\!-\!05$ \\

& $5$ & $0.166$ & $0.161$ & $0.163$ & $1.5\text{e+}00$ & $3.0\text{e}\!-\!02$ & $2.4\text{e+}00$ & $8.9\text{e}\!-\!01$ & $3.2\text{e+}00$ & $5.1\text{e}\!-\!03$ \\[0.25em]

LHR & $1$ & $0.135$ & $0.138$ & $0.139$ & $2.9\text{e}\!-\!02$ & $8.1\text{e}\!-\!01$ & $1.5\text{e+}01$ & $1.3\text{e}\!-\!01$ & $2.8\text{e+}00$ & $3.7\text{e}\!-\!01$ \\

& $2$ & $0.138$ & $0.143$ & $0.143$ & $1.9\text{e}\!-\!01$ & $9.2\text{e}\!-\!01$ & $1.2\text{e+}02$ & $5.1\text{e}\!-\!02$ & $2.9\text{e+}00$ & $4.4\text{e}\!-\!01$ \\

& $3$ & $0.152$ & $0.154$ & $0.155$ & $7.3\text{e}\!-\!01$ & $7.5\text{e}\!-\!01$ & $4.1\text{e+}01$ & $1.4\text{e}\!-\!01$ & $2.5\text{e+}00$ & $3.4\text{e}\!-\!01$ \\

& $4$ & $0.169$ & $0.167$ & $0.169$ & $1.4\text{e+}00$ & $2.5\text{e}\!-\!01$ & $1.7\text{e+}00$ & $5.4\text{e}\!-\!01$ & $1.7\text{e+}00$ & $7.8\text{e}\!-\!02$ \\

& $5$ & $0.186$ & $0.181$ & $0.182$ & $1.6\text{e+}00$ & $7.5\text{e}\!-\!02$ & $3.8\text{e}\!-\!01$ & $9.3\text{e}\!-\!01$ & $1.1\text{e+}00$ & $3.2\text{e}\!-\!01$ \\[0.25em]

ZRH & $1$ & $0.104$ & $0.108$ & $0.110$ & $3.0\text{e}\!-\!03$ & $9.3\text{e}\!-\!01$ & $3.0\text{e+}04$ & $5.5\text{e}\!-\!03$ & $6.2\text{e+}01$ & $3.2\text{e}\!-\!04$ \\

& $2$ & $0.110$ & $0.112$ & $0.114$ & $1.2\text{e}\!-\!01$ & $7.2\text{e}\!-\!01$ & $3.7\text{e+}01$ & $3.0\text{e}\!-\!02$ & $1.0\text{e+}01$ & $5.0\text{e}\!-\!05$ \\

& $3$ & $0.121$ & $0.118$ & $0.121$ & $1.5\text{e+}00$ & $8.9\text{e}\!-\!02$ & $3.2\text{e+}01$ & $4.4\text{e}\!-\!01$ & $5.1\text{e+}00$ & $1.0\text{e}\!-\!04$ \\

& $4$ & $0.138$ & $0.132$ & $0.134$ & $4.1\text{e+}00$ & $2.7\text{e}\!-\!03$ & $1.3\text{e+}00$ & $9.6\text{e}\!-\!01$ & $2.8\text{e+}00$ & $1.5\text{e}\!-\!03$ \\

& $5$ & $0.165$ & $0.156$ & $0.159$ & $1.5\text{e+}01$ & $2.3\text{e}\!-\!05$ & $8.4\text{e}\!-\!01$ & $1.0\text{e+}00$ & $2.4\text{e+}00$ & $1.7\text{e}\!-\!04$ \\
\end{tabular}
}%
\end{table}

\end{document}